\newtheorem{theorem}{Theorem}[section]
\newtheorem{lemma}[theorem]{Lemma}
\newtheorem{proposition}[theorem]{Proposition}
\newtheorem{remark}[theorem]{Remark}
\begin{document}
\makeatletter
\def\@setauthors{%
\begingroup
\def\thanks{\protect\thanks@warning}%
\trivlist \centering\footnotesize \@topsep30\p@\relax
\advance\@topsep by -\baselineskip
\item\relax
\author@andify\authors
\def\\{\protect\linebreak}%
{\authors}%
\ifx\@empty\contribs \else ,\penalty-3 \space \@setcontribs
\@closetoccontribs \fi
\endtrivlist
\endgroup } \makeatother
 \baselineskip 19pt
\title[{{\tiny Despite Absolute Information Advantages, All Investors Incur Welfare Loss }}]
 {{\tiny Despite Absolute Information Advantages, All Investors Incur Welfare Loss }}
 \vskip 10pt\noindent
\author[{Zongxia Liang, Qi Ye}]
{\tiny {\tiny  Zongxia Liang$^{a}$, Qi Ye$^{b}$ }
 \vskip 10pt\noindent
{\tiny Department of
Mathematical Sciences, Tsinghua University, Beijing 100084, China }
  \footnote{\\
 $ a$ email: liangzongxia@tsinghua.edu.cn\\
 $ b$ Corresponding author, email:   yeq19@mails.tsinghua.edu.cn\\  
 Funding: This work was funded by National Natural Science Foundation of China (Grant Nos.12271290)}}
\maketitle
\noindent





\begin{abstract}
This paper delves into financial markets that incorporate a novel form of heterogeneity among investors, specifically in terms of their beliefs regarding the reliability of signals in the business cycle economy model, which may be biased. Unlike most papers in this field, we not only analyze the equilibrium but also examine welfare using objective measures while investors aim to maximize their utility based on subjective measures. Furthermore, we introduce passive investors and use their utility as a benchmark, thereby revealing the phenomenon of double loss sometimes. In the analysis, we examine two effects: the distortion effect on total welfare and the advantage effect of information and highlight their key factors of influence, with a particular emphasis on the proportion of investors. We also demonstrate that manipulating investors' estimation towards the economy can be a way to improve utility and identify an inner connection between welfare and survival.  
 \vskip 10pt  \noindent
JEL Classifications : G12, D50, D60
 \vskip 10pt  \noindent
Keywords: General equilibrium, Business cycle, Belief dispersion, Welfare analysis, Double loss
\end{abstract}

\section{introduction}

Is trading always beneficial for all investors in the market? The answer is no, especially in cases of homogeneous preferences where trading can lead to welfare distortion. Despite this, many investors continue to trade actively in the market under the belief that they can improve their utility. However, investors typically rely on subjective measures to evaluate their utility rather than objective measures, which they may not know. In this paper, we aim to analyze the welfare of investors in the market.

Our analysis focuses on the origin of heterogeneity in investors' beliefs about the reliability of signals in a business cycle economy model where the economy growth rate follows the mean-reverting process. We consider two classes of investors: Class-I investors, who rely on information and believe that signals are useful for more accurate estimation of the economy state, and Class-R investors, who are cautious and rely only on the actual evolution of the economy, not trusting any ``rumors" in the market. Both classes of investors observe the signals and they agree to disagree other's attitude towards the signals. However, it is important to note that these signals may be biased, and investors are not aware of the extent of the bias. Therefore, investors can only choose to trust or distrust the signals without knowing the degree of bias and even whether it is biased.

Building on the work of \cite{fedyk2013market} and \cite{he2017index}, we place significant emphasis on welfare analysis rather than just analyzing the effect of heterogeneity in the market equilibrium. While investors benefit themselves by trading in the market based on their subjective probability measure, we adopt the approach of measuring the expected ex-post welfare using objective expected utilities. Measuring welfare objectively provides a better perspective to evaluate real utility improvements. Our findings suggest that when the signal is unbiased or slightly biased, Class-I investors achieve higher utility than Class-R investors through information advantage. However, when the signal is heavily biased, the comparison is reversed as Class-R investors can improve their welfare through making use of Class-I investors's bad estimation of the economy state.

We introduce passive investors, also known as index investors, who only consume their allocated holdings of risky assets and do not participate in trading, thus having no impact on the market equilibrium. The inclusion of passive investors provides a good benchmark that the investors can achieve just through no trading, which helps us assess whether active investors can truly improve their utility by trading in the market. We found that in the unbiased case, if the proportion of Class-I investors is too large, their utility is smaller than that of passive investors even Class-I investors embrace the absolute information advantage. This implies that Class-I investors fail to achieve welfare improvements through trading, which results in a phenomenon of double loss, where all investors' welfare is lower than benchmark welfare. To illustrate this counterintuitive fact, we introduce the concept of total welfare weighted by investor wealth. We find that trading in the market is a negative-sum game in homogeneous preference in objective measure, where the double loss phenomenon can be explained by the fact that the utility improvements gained by trading with Class-R investors using information advantage are dominated by the distortion of total welfare. This phenomenon also occurs in the extremely biased case when Class-R investors are the majority. Our study highlights the proportion of each class of investors as the determining factor that measures these two effects.

We explore a scenario where hypothetical investors know the objective measure and can choose to be Class-I, Class-R, or passive investors based on their optimal strategy. Using the insights gained from this scenario, we can deduce a phenomenon. If the majority of investors are Class-R, then investors knowing the unbiased signal can take advantage of information and trade with them to improve utility. However, if the majority of investors are Class-I, taking advantage of information can lead to little or even negative utility gains. At this time, investors can manipulate the market by transforming a biased version of the signal and disguising themselves as Class-R investors to deceive Class-I investors and achieve higher utility.

At last, we delve into the survival analyze (\cite{kogan2006price}) whether investors will eventually consume little or total over time. Our findings align with the utility comparison, indicating that Class-I investors survive if the signal is unbiased or slightly biased, whereas Class-R investors survive if the signal is heavily biased. By applying techniques from repeated game theory, we demonstrate that survival and higher utility are equivalent when $\rho=0$. In other words, ignoring time preference, achieving higher utility is equivalent to survival over infinite time. However, when $\rho>0$, the differences between the two classes are affected by their early-time consumption. The other way they are equal if we allow the initial state follows the steady distribution. This assumption can be achieved if there are infinite history to learn for estimating the economy state.

In summary, this paper makes three primary contributions. First, it compares the welfare of Class-I, Class-R, and the benchmark in an objective measure, highlighting the double loss phenomenon. Second, it demonstrates that all-knowing investors can achieve utility improvements not only through their informational advantage but also by manipulating other investors' perceptions of the economic state. Last, it establishes a link between survival and higher utility. 

The paper is structured as follows: Section one presents the economic model and investors' preferences. Section two determines the market equilibrium. Section three examines the welfare of all investors. Section four investigates a hypothetical scenario, illustrating how manipulating investors' perceptions can also lead to higher utility. Section five connects survival analysis with utility analysis. Finally, the paper concludes in the last section.

\textbf{Literatures Review}: Numerous studies have researched the impact of heterogeneity on market equilibrium of a pure exchange economy, with the two most common sources of heterogeneity being differences in preferences and beliefs. In terms of preferences, investors may exhibit varying levels of time preference rates and risk aversion. Additionally, \cite{muraviev2013market} incorporates the effect of habit formation as a benchmark to provide a non-standard utility function. With regards to beliefs, investors may hold different estimates of the growth rate of the economy. The most viewed case involves just two investors with different values for the growth rate and there are amounts of paper researching on it (\cite{cvitanic2012financial} and \cite{abbot2017heterogeneous}). The most gorgeous one belong to \cite{bhamra2014asset} who even give the explicit form of the solution by considering both two sources.  

Certainly,  there are also certain mechanisms that serve as sources of heterogeneity. \cite{basak1998equilibrium}  propose the restriction of stock market participant and \cite{basak2000equilibrium} suggest the restriction of portfolio leverage for example the preclusion of short selling. \cite{ehling2018disagreement} introduce the impact of inflation, which led to disagreements among investors. Some literature even suggest that investors may have varying interpretations of the economic model, such as \cite{buraschi2006model} and \cite{cujean2017does}. For example the later consider that some investors see the state of the economy as continuously evolving throughout good and bad times, while other agents view the economy in discrete terms with good and bad times alternating.

The heterogeneity in investors' beliefs about the growth rate of the economy is far more complex than just having different certain values, as it can be influenced by various factors. \cite{david2008heterogeneous} establishes a fundamental analysis based on the model of earnings forecast dispersion. \cite{andrei2019asset} consider investors using different signals to estimate the length of the business cycle, while \cite{wang2020disagreement} consider that some investors may have a pro-cyclical belief. Academics have devoted significant effort to studying the different economic models and finding the origins of heterogeneity in the market.

In this paper, we need to emphasize the distinction between our work and \cite{he2017index}, as certain conclusions may seem similar to those presented in their paper. However, our main conclusion is derived from a completely different perspective. While their conclusions are based on the assumption of equal endowments for each class, our focus is on the proportion of each class, which we consider to be the most significant factor. Further comparative details are provided in the content.

\section{The Model}

We propose a dynamic general equilibrium model that incorporates heterogeneous investors with varying levels of confidence in estimating the business cycle through external signals. Our model includes a description of the economy and optimization problems for investors under different precision levels. We characterize the equilibrium interest rate and market price of risk within this framework.

\subsection{The Economy and Models of the Business Cycle}

We consider an economy with an aggregate dividend that flows continuously over time. The market consists of two securities: a risky asset in positive supply of one unit, and a riskless asset in zero net supply. The risky asset, which we call the stock, is a claim to the dividend process, $D$, that satisfies the stochastic differential equation:
\begin{equation*}
dD_t = \mu_t D_t dt + \sigma_D D_t dW_t,
\end{equation*}
where $W$ is a standard Brownian motion under the objective probability measure $P$. The expected dividend growth rate, denoted $\mu$, is unobservable and follows a mean-reverting process:
\begin{equation*}
	d\mu_t = \kappa (\bar{\mu} -\mu_t) +\sigma_\mu dW^\mu_t,
\end{equation*}
where $\bar{\mu}$ is the long-term mean, $\kappa$ is the speed of reverting and Brownian motion $W^\mu$ is independent to $W$ in objective measure.

The economy is populated by two classes of investors, R and I, who consume dividends and trade in the market. As the state of the economy is unobservable, investors need to estimate it using the empirical realization of dividends. Class-R investors estimate the state only via the process of dividends D, while Class-I investors hear information in the market and believe in an external signal process:
\begin{equation*}
de_t = \mu_t dt + \sigma_e dB^\zeta_t,
\end{equation*}
where $B^\zeta$ is also a standard Brownian motion in their own subjective probability measure $P^I$ and independent of W and $W^\mu$ . However, $B^\zeta$ is not a Brownian motion in the objective probability measure P. Class-I investors incorrectly believe this signal is unbiased for estimating $\mu_t$, while the signal may be processed incorrectly, for instance, due to fallacious data collection, inaccurate analyst reports, or it may even be deliberately leaked by Class-R investors, where more details will be represented later. In fact, the signal follows:
\begin{equation*}
de_t = (\mu_t + \sigma_e \zeta) dt + \sigma_e dB_t,
\end{equation*}
where $\zeta$ is a constant and B is a Brownian motion in the objective probability space. If $\zeta=0$, the signal is unbiased. We can see the relation of $B^\zeta$ and B with:
\begin{equation*}
dB^\zeta_t = \zeta dt + dB_t.
\end{equation*}

It must be stated that Class-R investors also observe the external signal. However, for prudence, they do not trust the signal and will not estimate the market state according to it. Each class of investors agree to disagree the other's attitudes towards the signal. This will affect the equilibrium as it will make Class-R investors aware of the estimation held by Class I. 

Furthermore, the precision of the external signal is reflected by the inverse of the diffusion parameter, denoted as:
\begin{equation*}
h_e := 1/\sigma_e.
\end{equation*}
As $h_e$ approaches zero, the signal becomes less informative and contributes little to the estimation. Conversely, when $h_e$ approaches infinity, the investors have complete knowledge of the drift rate at any time.

Likewise, the precision of the ``dividend signal" is given by:
\begin{equation*}
h_D := 1/\sigma_D.
\end{equation*}

Investors update their expectations based on their sources of information and derive an estimate, which we refer to as the filter. Assuming that the initial estimation of Class-R investors follows normal distribution with mean $\mu^R_0$ and variance $\gamma^R_0$ and the initial estimation of Class-I investors follows normal distribution with mean $\mu^I_0$ and variance $\gamma^I_0$.  Proposition below presents the dynamics of the filters for classes R and I of investors.

\begin{proposition}
 The filter of Class-R investors, the conditional expectation $\mu^R_t = E_t^R[\mu_t]$ and the mean square error $\gamma^R_t = E[(\mu_t-\mu^R_t)^2]$ , evolves according to 
 \begin{equation*}\label{filterR}
 	d\mu^R_t = \kappa(\bar{\mu}-\mu^R_t)dt + \gamma^R_t h_D d\hat{W}^R_t,
 \end{equation*}
 \begin{equation*}
 	d\gamma^R_t = \left( -2\kappa \gamma^R_t +\sigma_\mu^2 - h_D^2 (\gamma^R_t)^2    \right)dt,
 \end{equation*}
 where $\hat{W}^R$ is a Brownian motion under class R's probability measure $P^R$,
 \begin{equation*}
 	d\hat{W}^R_t = h_D (dD_t/D_t - \mu^R_t dt).
 \end{equation*}
 Similarly, the filter of Class-I investors, the conditional expectation $\mu^I_t = E_t^I[\mu_t]$ and the mean square error $\gamma^I_t = E[(\mu_t-\mu^I_t)^2]$, evolves according to 
 \begin{equation*}\label{filterI}
 	d\mu^I_t = \kappa(\bar{\mu}-\mu^I_t)dt + \gamma^I_t (h_D d\hat{W}^I_t +h_e d\hat{B}^I_t),
 \end{equation*}
 \begin{equation*}
 	d\gamma^I_t = \left( -2\kappa \gamma^I_t +\sigma_\mu^2 - (h_D^2+h_e^2) (\gamma^I_t)^2    \right)dt,
 \end{equation*}
 where $\hat{W}^I$ and $\hat{B}^I$ are Brownian motion under class I's probability measure $P^I$,
 \begin{equation*}
 	d\hat{W}^I_t = h_D (dD_t/D_t - \mu^I_t dt),
 \end{equation*}
 \begin{equation*}
 	d\hat{B}^I_t = h_e (de_t - \mu^I_t dt).
 \end{equation*}

\end{proposition}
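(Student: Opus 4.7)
The plan is to recognize the setup as a classical linear-Gaussian filtering problem on each investor's own filtration, apply the Kalman-Bucy filter, and upgrade the innovations to Brownian motions via L\'evy's characterization. Because $\mu_0$ is assumed Gaussian for both classes and the coefficients of the state and observation dynamics are linear, the conditional law of $\mu_t$ stays Gaussian for all $t$, so the filter is fully characterized by its conditional mean and mean-square error.

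For Class-R, the normalized observation $h_D(dD_t/D_t) = h_D \mu_t dt + dW_t$ is a linear measurement of $\mu_t$ in unit-variance Gaussian noise; paired with the Ornstein-Uhlenbeck dynamics $d\mu_t = \kappa(\bar\mu - \mu_t)dt + \sigma_\mu dW^\mu_t$ (with $W$ and $W^\mu$ independent), this is a standard one-dimensional Kalman-Bucy system. Working on the dividend-generated filtration, on which $P$ and $P^R$ agree, the usual computation produces the gain $\gamma^R_t h_D$ against the innovation $h_D(dD_t/D_t - \mu^R_t dt)$ and the deterministic Riccati ODE $\dot\gamma^R_t = -2\kappa \gamma^R_t + \sigma_\mu^2 - h_D^2 (\gamma^R_t)^2$. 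Defining $\hat W^R_t := \int_0^t h_D(dD_s/D_s - \mu^R_s ds)$, I would then invoke L\'evy: $\hat W^R$ is a continuous martingale on the dividend filtration with $d\langle \hat W^R\rangle_t = h_D^2 \sigma_D^2 dt = dt$, hence a $P^R$-Brownian motion.

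For Class-I, under the subjective measure $P^I$ the processes $W$ and $B^\zeta$ are by definition independent standard Brownian motions, so Class-I sees two independent linear Gaussian measurements of $\mu_t$ with precisions $h_D$ and $h_e$. The multi-dimensional Kalman-Bucy filter then gives a drift correction that splits additively across the two channels, yielding $\gamma^I_t(h_D d\hat W^I_t + h_e d\hat B^I_t)$, while the precisions combine as $h_D^2 + h_e^2$ in the Riccati equation. The innovations $\hat W^I$ and $\hat B^I$ are continuous $P^I$-martingales with unit quadratic variation and zero cross-variation (the latter from independence of $W$ and $B^\zeta$ under $P^I$), so L\'evy again promotes the pair to a planar $P^I$-Brownian motion.

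The only conceptually delicate point is that all filtering calculations for Class-I are carried out under $P^I$, not under $P$; since $B^\zeta = B + \zeta t$ is not a Brownian motion under $P$ when $\zeta \neq 0$, transferring Class-I's filter to the objective measure for the later welfare analysis will require an explicit Girsanov change of measure from $P^I$ to $P$. Within the statement of the proposition, however, this concern does not appear, and the remaining work is routine: verifying that the Gaussian ansatz is preserved by the linear dynamics, matching coefficients in the Kalman-Bucy system to read off the filter SDEs and Riccati ODEs, and checking the L\'evy conditions for the innovation processes.
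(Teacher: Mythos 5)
Your proposal is correct and follows essentially the same route as the paper, which simply cites Liptser and Shiryaev for the conditionally Gaussian (Kalman--Bucy) filtering theory you spell out. Your sketch --- Gaussian conditional law preserved by linear dynamics, Riccati ODE for the mean-square error, and L\'evy's characterization of the innovation processes under each investor's subjective measure --- is exactly the standard argument that citation delegates to.
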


\begin{proof}
	See \cite{liptser2013statistics}. 
\end{proof}

As we can see, when $h_e \to 0$, the case of I reduces to the case of R.

If the real state is known (which will be used in welfare analysis), then we can rewrite the processes $\mu^R$ and $\mu^I$ as follows:
\begin{equation*}
d\mu^R_t = \kappa(\bar{\mu}-\mu^R_t)dt +\gamma^R_t h_D^2 (\mu_t-\mu^R_t)dt + \gamma^R_t h_D dW_t,
\end{equation*}
\begin{align*}\nonumber
d\mu^I_t &=  \kappa(\bar{\mu}-\mu^I_t)dt +\gamma^I_t (h_D^2+h_e^2) (\mu_t-\mu^I_t)dt\\
& \ \ \ +\gamma^I_t h_e \zeta dt+ \gamma^R_t (h_D dW_t+h_e dB_t).
\end{align*}

For convenience, when no specific analysis of $\zeta$ is considered, we will write $\mu^I_t$ as $\mu^I_t(\zeta)$ throughout the remainder of this discussion.

From the Proposition, we can see that the drift term $\mu^R$ is determined by two factors: the mean-reverting term $\kappa(\bar{\mu}-\mu^R_t)$ and the term toward the real economy $\gamma^R_t h_D^2 (\mu_t-\mu^R_t)$. The drift term $\mu^I$ is determined by three factors: the first two terms are the same as $\mu^R$, and the biased term $\gamma^I_t h_e \zeta$.

Additionally, when $\zeta=0$, due to the fact that $(h_D^2+h_e^2)>h_D^2$ in the second term, indicating that external signals facilitate faster learning towards the real economic state. However, if $\zeta>0$, the signal is biased towards optimistic estimates, leading to an overestimation of class I, whereas if $\zeta<0$, the opposite is true.

The following figures illustrate the evolution of the real economic state and their estimates. One plot depicts the unbiased case, while the other two represent the optimistic and pessimistic cases. We observe that in the unbiased case, $\mu^I_t$ is usually closer to $\mu_t$. However, in the biased cases, we observe some deviations in the estimation of $\mu^I$.

The parameter values used are given that $\bar{\mu}=0.04$, $\kappa=0.2$, $\sigma_\mu=0.01$, $\sigma_D=0.2$, $\sigma_e=0.05$, $\rho=0.02$, and initial condition as $\mu^R_t=\mu^I_t=\mu_0=\bar{\mu}$ and $\gamma^R_0 = \gamma^I_0 =0$ . The latter figure employs the same parameters as the former.

\begin{figure}
  \centering
  \begin{subfigure}[ht]{0.9\textwidth}
    \includegraphics[width=\textwidth]{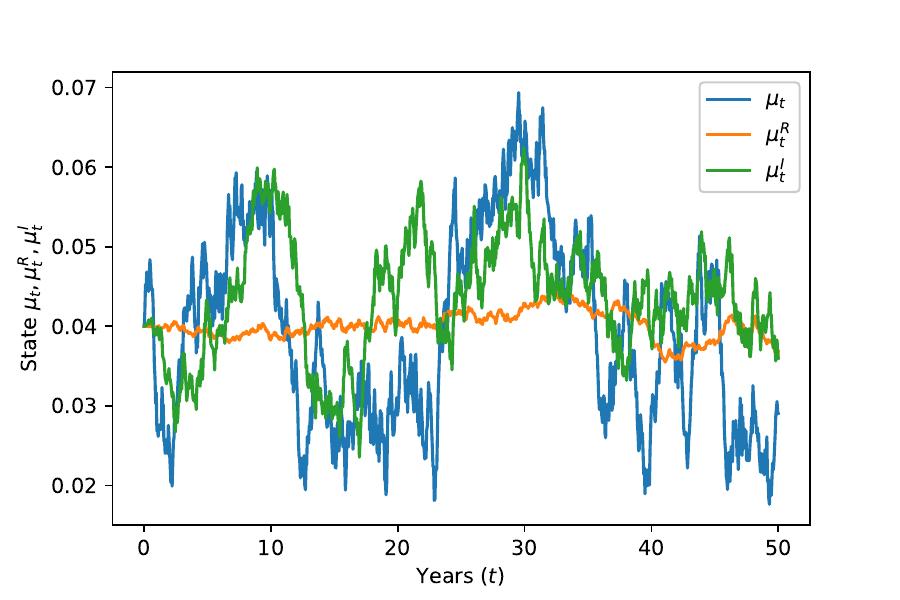}
    \caption{$\zeta=0$ .}
  \end{subfigure}
  \centering
  \begin{subfigure}[h]{0.48\textwidth}
    \includegraphics[width=\textwidth]{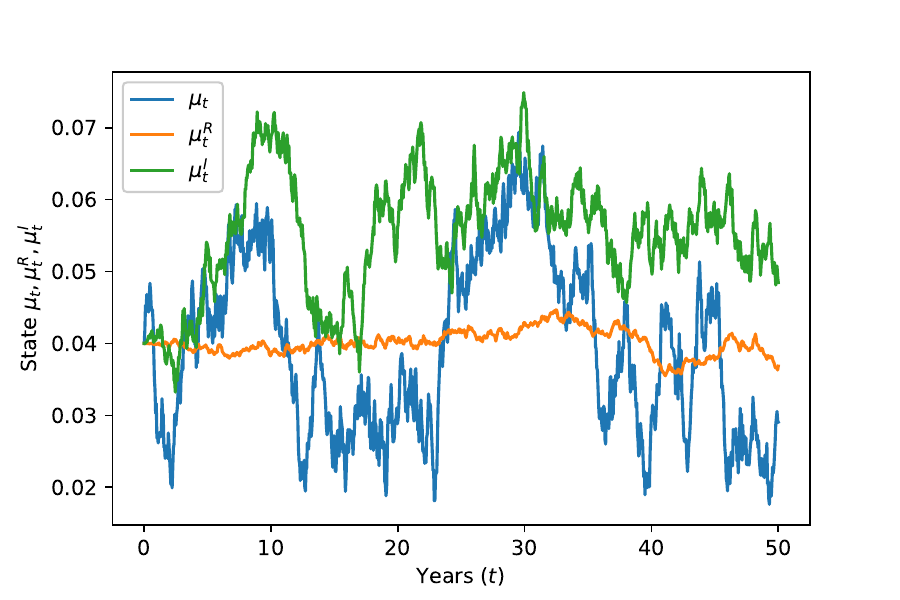}
    \caption{$\zeta=1$ .}
  \end{subfigure}
  \hfill
  \begin{subfigure}[h]{0.48\textwidth}
    \includegraphics[width=\textwidth]{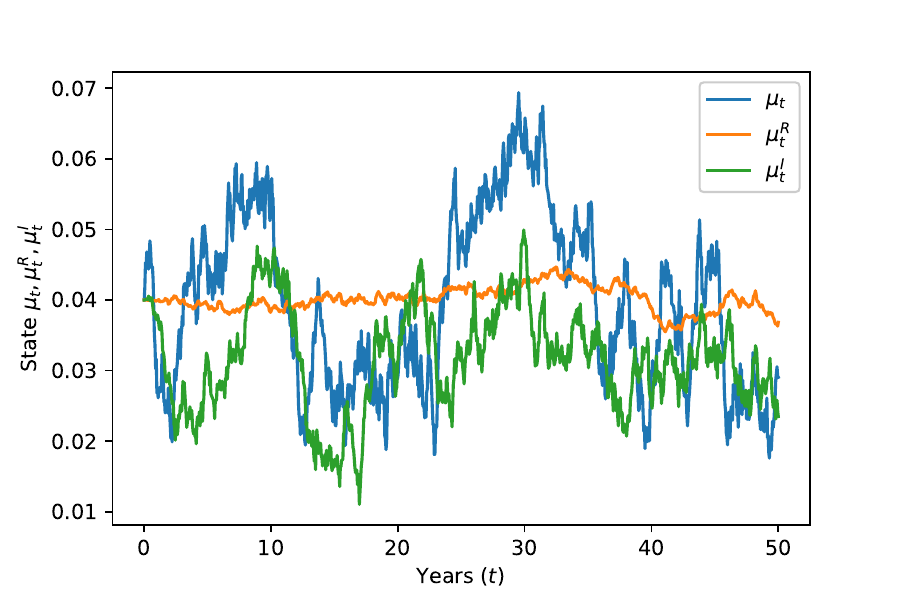}
    \caption{$\zeta=-1$ .}
    \label{fig:figure2}
  \end{subfigure}
  \caption{The evolution of $\mu^I$ and $\mu^R$.}
\end{figure}

\subsection{Investors' Preference}

The objective of investors in each class is to maximize their lifetime utility through trading in the market, given by

\begin{equation*}
E^m\bigg[ \int_0^\infty u(s,c^m_s) ds \bigg], \ \ m = R, I,
\end{equation*}
where $u(s,x):=e^{-\rho s} \log(x)$, $\rho$ is the time discount factor, and $E^m$ denotes the expectation under the subjective beliefs $P^m$ of the Class-$m$ investor. The wealth process is given by

\begin{equation}\label{wealth}
dX^m_t =\left[ r_t X^m_t +\pi^m_t ({\mu_P}_t +\frac{D_t}{P_t}-r_t)X^m_t -c^m_t\right]dt+\pi^m_t X^m_t{\sigma_P}_t dW_t,
\end{equation}
where $r$ is the interest rate and $P$ is the risky asset's price, determined through market equilibrium.

\section{Market Equilibrium}
A rational expectations equilibrium (REE) refers to a state in which each agent in the economy maximizes their lifetime utility, and a set of conjectured prices for all securities is established for each date and state for each agent, such that total consumption equals total output in the economy, markets clear, and agents agree on prices in all dates and states.

Using the martingale method, we can reduce equation (\ref{wealth}) to an inequality as follows:

\begin{equation}\label{martingale}
E^m\left[ \int_0^\infty c^m_s \xi^m_s ds\right]\leq E^m\left[ \int_0^\infty e^m x_s \xi^m_s ds \right] \equiv X^m_0,
\end{equation}
where $e^R$ ($e^I$) denotes the Class-R's (Class-I's) endowment with a proportion of claim of the economy output, such that $e^R+e^I=1$, $X^m_0$ represents the value of their endowment, and $\xi^m_t$ denotes the Class-$m$ investor's state-price density (SPD) function for consumption at time $t$.

We write (\ref{martingale}) in terms of the SPD function as:

\begin{equation}\label{SPD}
\frac{d\xi^m_t}{\xi^m_t}=-r_t dt- \varphi^m_t d\hat{W}^m_t,
\end{equation}
where $r_t$ represents the real rate of interest, and $\varphi^m_t$ denotes the market price of risk for the Class-$m$ investor.

The first order condition can then be expressed as:

\begin{equation*}
u_x(t,c^m_t)=y^m \xi^m_t,
\end{equation*}
where $y^m$ is determined by the constraint of (\ref{martingale}).

As the investors must agree on the level of prices at all dates (see in \cite{david2008heterogeneous}), this condition must be met:

\begin{equation*}
\varphi^R_t - \varphi^I_t = h_D(\mu^R_t-\mu^I_t ).
\end{equation*}

A key state variable in the analysis is the disagreement value process:

\begin{equation*}
\eta_t=\frac{\xi^I_t}{\xi^R_t},
\end{equation*}
which is the ratio of Class-I investors' SPD to the Class-R investors' SPD. $\eta_t$ is the ratio of the likelihood of observing the fundamentals at date $t$ as a realization of the model of Class-I investors to that of the Class-R investors. Then we calculate that:

\begin{equation*}
\frac{d\eta_t}{\eta_t}=(\mu^R_t-\mu^I_t )h_D d\hat{W}^I_t.
\end{equation*}

The process $\eta$ enables us to study the evolution of the disagreement process given the history of each class's beliefs.

\begin{theorem}\label{theorem1}
	1.The individual consumption flow rates are 
	\begin{equation*}
  \left\{
  \begin{aligned}
    & c^R_t=\lambda_t D_t \\
    & c^I_t=(1-\lambda_t)D_t,
  \end{aligned}
  \right.
\end{equation*}
	where
	\begin{equation*}
		\lambda_t= \frac{k\eta_t}{1+k\eta_t},
	\end{equation*}
	and $k=\frac{y^I}{y^R} $.
	
	2. The interest rate and market prices of risk in the economy is given with 
	
	\begin{equation*}
  \left\{
  \begin{aligned}
    & r_t=\rho +\left(\lambda_t \mu^R_t+(1-\lambda_t)\mu^I_t \right) -\sigma_D^2 \\
    & \varphi^R_t= \sigma_D +h_D (1-\lambda_t)( \mu^R_t-\mu^I_t ) \\
    & \varphi^I_t= \sigma_D - h_D \lambda_t( \mu^R_t-\mu^I_t ).
  \end{aligned}
  \right.
\end{equation*}
\end{theorem}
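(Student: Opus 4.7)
The plan is the standard martingale/Lagrangian argument combined with goods market clearing, followed by an It\^o computation on the closed-form SPD. First, since $u(t,x)=e^{-\rho t}\log x$ the first-order condition $u_x(t,c^m_t)=y^m\xi^m_t$ gives $c^m_t=e^{-\rho t}/(y^m\xi^m_t)$; imposing $c^R_t+c^I_t=D_t$ together with $\xi^I_t=\eta_t\xi^R_t$ yields a single equation in $\xi^R_t$, and solving it produces $c^R_t=\lambda_t D_t$, $c^I_t=(1-\lambda_t)D_t$ with $\lambda_t=k\eta_t/(1+k\eta_t)$ and $k=y^I/y^R$. This disposes of part 1.

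Substituting back into the FOC gives the closed form $\xi^R_t=e^{-\rho t}/(y^R\lambda_t D_t)$, so for part 2 I would apply It\^o's lemma under $P^R$ and match with $(\ref{SPD})$. The ingredients are: (i) $dD_t/D_t=\mu^R_t\,dt+\sigma_D\,d\hat{W}^R_t$ under $P^R$; (ii) the $P^R$-dynamics of $\eta_t$, obtained by passing from the stated $P^I$-form $d\eta_t/\eta_t=(\mu^R_t-\mu^I_t)h_D\,d\hat{W}^I_t$ to $P^R$ via the identity $d\hat{W}^I_t=d\hat{W}^R_t+h_D(\mu^R_t-\mu^I_t)\,dt$ (which one obtains by subtracting the two filter definitions of $\hat W$), giving $d\eta_t/\eta_t=h_D^2\Delta_t^2\,dt+h_D\Delta_t\,d\hat{W}^R_t$ with $\Delta_t:=\mu^R_t-\mu^I_t$. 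Applying It\^o to $\lambda_t=k\eta_t/(1+k\eta_t)$ then produces $d\lambda_t/\lambda_t=(1-\lambda_t)^2 h_D^2\Delta_t^2\,dt+(1-\lambda_t)h_D\Delta_t\,d\hat{W}^R_t$.

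Assembling $d\xi^R_t/\xi^R_t$ from $d\log\xi^R_t=-\rho\,dt-d\log\lambda_t-d\log D_t$ and adding the It\^o correction, the coefficient of $d\hat{W}^R_t$ comes out as $-[\sigma_D+(1-\lambda_t)h_D\Delta_t]$, giving $\varphi^R_t=\sigma_D+h_D(1-\lambda_t)\Delta_t$. After expansion, the drift has two contributions in $(1-\lambda_t)^2 h_D^2\Delta_t^2$ (one from $d\log\lambda_t$, one from the It\^o correction) that cancel exactly, leaving $-[\rho+\mu^R_t-\sigma_D^2-\sigma_D h_D(1-\lambda_t)\Delta_t]$. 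Using the crucial identity $\sigma_D h_D=1$, this simplifies to $-[\rho+\lambda_t\mu^R_t+(1-\lambda_t)\mu^I_t-\sigma_D^2]$, yielding the stated $r_t$. Finally, the agreement-on-prices condition $\varphi^R_t-\varphi^I_t=h_D(\mu^R_t-\mu^I_t)$ quoted just before the theorem immediately delivers $\varphi^I_t=\sigma_D-h_D\lambda_t\Delta_t$.

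The main obstacle is pure bookkeeping: carrying out the $P^I$-to-$P^R$ change of measure for $\eta_t$ and tracking the quadratic variations of $d\lambda_t$ carefully enough that the $h_D^2\Delta_t^2$ terms cancel, rather than clutter the drift. Once they do, the $\sigma_D h_D=1$ simplification is what condenses $r_t$ into the consumption-weighted subjective growth rate minus $\sigma_D^2$.
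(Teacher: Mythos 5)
Your proposal is correct, and every step checks out: the share formula $c^R_t/c^I_t=k\eta_t$ from the two first-order conditions, the measure change $d\hat{W}^I_t=d\hat{W}^R_t+h_D(\mu^R_t-\mu^I_t)\,dt$, the resulting dynamics $d\lambda_t/\lambda_t=(1-\lambda_t)^2h_D^2\Delta_t^2\,dt+(1-\lambda_t)h_D\Delta_t\,d\hat{W}^R_t$, the cancellation of the $(1-\lambda_t)^2h_D^2\Delta_t^2$ drift terms, and the final use of $\sigma_Dh_D=1$ to turn the surviving cross term into $(1-\lambda_t)(\mu^R_t-\mu^I_t)$. The route differs from the paper's in a worthwhile way. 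The paper never writes the SPD in closed form: it applies It\^o to $c^m_t=(e^{\rho t}y^m\xi^m_t)^{-1}$ with $r_t$ and $\varphi^m_t$ left as unknowns, sums over $m$, and matches drift and diffusion against $dD_t=\mu_tD_tdt+\sigma_DD_tdW_t$ under the objective measure; since this yields only two equations for three unknowns, the price-agreement condition $\varphi^R_t-\varphi^I_t=h_D(\mu^R_t-\mu^I_t)$ is needed to close the system and deliver both market prices of risk simultaneously. You instead invert the clearing condition first to get $\xi^R_t=e^{-\rho t}/(y^Ry\lambda_tD_t)$ --- more precisely $\xi^R_t=e^{-\rho t}/(y^R\lambda_tD_t)$ --- and read $r_t$ and $\varphi^R_t$ off a direct It\^o expansion, invoking the agreement condition only to recover $\varphi^I_t$ (which you could equally get by the symmetric computation on $\xi^I_t$, making the argument self-contained). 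Your version costs more bookkeeping (the $P^I$-to-$P^R$ conversion of $\eta_t$ and the quadratic variation of $\lambda_t$) but makes explicit where each term of $r_t$ and $\varphi^R_t$ comes from; the paper's version is shorter but leans on the externally quoted agreement condition as an essential input rather than a consistency check.
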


\begin{proof}
	Using the equation $c^m_t=(e^{\rho t}y^m \xi^m_t)^{-1 }$ and the process of $\xi_t^m$ (\ref{SPD}) we obtain 
	\begin{align*} \nonumber
		dc^m_t &=(r_t-\rho + (\varphi^m_t)^2)c^m_t dt + \varphi^m_t c^m_t dW^m_t\\
		&=(r_t-\rho + (\varphi^m_t)^2+\varphi^m_t (\mu^t-\mu^m_t)h_D )c^m_t dt + \varphi^m_t c^m_t dW_t,
	\end{align*}
	for $m=R, I$.
	As $c^R_t+c^I_t=D_t$, then there will be the relation: $dD_t = dc^R_t+dc^I_t$. Comparing two process's drift term and diffusion term. We can obtain the interest rate and the market price of risk of each class.
\end{proof}

As for the interest rate, the first term represents time preference, the second term reflects the usual wealth effect on consumption. Specifically, when the growth rate of consumption increases, investors become less willing to save for the future, leading to a higher equilibrium interest rate. Here, the growth rate is the average expected growth rate of investors, weighted by their shares of total consumption. The third term reflects the precautionary demand that arises from the volatility in the process. If the volatility is larger, investors will have more demand to offset this risk, which will lower the equilibrium interest rate.

\section{Welfare analysis}

Previous literature has largely focused on the exchange economy setting with various forms of heterogeneity, resulting in speculation, redistribution, and inefficient risk sharing. In contrast, our research focuses on welfare analysis. While participants may benefit from trading in the market, these benefits are only under their subjective probability measure. Thus, it is important to compare each class's utility under the objective probability measure to determine whether all investors benefit from trading.

For convenience, we define the welfare $U^m$ of Class-$m$ as given by
\begin{equation*}
U^m:=E\bigg[ \int_0^\infty u(t,\frac{c_t}{X_0^m}) dt \bigg].
\end{equation*}
It represents the utility achieved under the objective probability measure with one unit endowment of Class-$m$.

The introduction of the objective probability measure has led to some adjustments to \textbf{Theorem \ref{theorem1}}.

There is a unique real state price density $\xi_t$ with an initial value of $\xi_0=1$, and
\begin{equation*}
\frac{d\xi_t}{\xi_t}=-r_t dt- \varphi_t dW_t,
\end{equation*}
where $\varphi_t$ denotes the real market prices of risk at time $t$.

Similarly, we obtain:
\begin{equation*}
\varphi_t-\varphi^m_t= h_D(\mu_t-\mu^m_t).
\end{equation*}

This is because the price in the market must be the same. From \textbf{Theorem \ref{theorem1}}, we have:
\begin{equation*}
\varphi_t= \sigma_D +h_D ((1-\lambda_t)(\mu_t-\mu^I_t)+\lambda_t (\mu_t-\mu^R_t)).
\end{equation*}

Next, we characterize Class-$m$'s subjective probability measure as follows:

\begin{equation*}
\eta^m_t=\frac{\xi_t}{\xi^m_t},
\end{equation*}
which evolves as:
\begin{equation*}
\frac{d\eta^m_t}{\eta^m_t}=(\mu^m_t-\mu_t)h_DdW_t.
\end{equation*}

We will analyze the equilibrium and welfare based on the perspective of the objective probability measure.

\subsection{Comparison of Class-I's welfare and Class-R's welfare}
We begin by considering the case where the external signal is unbiased, i.e., $\zeta=0$. In this case, the Class-I investors truly benefit from their information advantage. We have the following theorem:

\begin{theorem}
In the unbiased case, assuming $\mu^R_0=\mu^I_0=\mu_0$ and $\gamma^R_0 = \gamma^I_0$ to avoid the effect of the initial estimation of each class, we have
\begin{equation*}
U^I\geq U^R,
\end{equation*}
where $U^I$ and $U^R$ are the utilities of Class-I and Class-R investors, respectively.
\end{theorem}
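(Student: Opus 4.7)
My plan is to reduce the welfare difference $U^I-U^R$, via Theorem \ref{theorem1}, to the deterministic comparison $\gamma^I_t\le\gamma^R_t$ that comes from the two Riccati ODEs in Proposition~2.1. First, Theorem \ref{theorem1} gives $c^I_t/c^R_t=(1-\lambda_t)/\lambda_t=1/(k\eta_t)$ with $k=y^I/y^R$; log utility combined with the budget constraint (\ref{martingale}) yields $y^m=1/(\rho X^m_0)$, so $k=X^R_0/X^I_0$. Under the hypothesis $\mu^R_0=\mu^I_0$ and $\gamma^R_0=\gamma^I_0$ the two classes share the same initial belief and therefore the same initial state-price density, so they value the aggregate dividend stream identically, giving $X^m_0=e^m P_0$ for a common $P_0$. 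This forces $X^R_0/X^I_0=e^R/e^I=k$ and produces the clean cancellation
\[
U^I-U^R \;=\; E\!\left[\int_0^\infty e^{-\rho t}\log\frac{c^I_t/X^I_0}{c^R_t/X^R_0}\,dt\right] \;=\; -\int_0^\infty e^{-\rho t}\,E[\log\eta_t]\,dt.
\]

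Next I would pass to the objective measure. Writing $\eta^m_t:=\xi_t/\xi^m_t$ as in the excerpt, one has $\eta_t=\eta^R_t/\eta^I_t$, and the paper already records $d\eta^m_t/\eta^m_t=(\mu^m_t-\mu_t)h_D\,dW_t$, so each $\eta^m$ is a positive $P$-martingale starting at $1$ (a Novikov check suffices, using the Gaussianity of $\mu-\mu^m$). It\^o on $\log\eta^m_t$ followed by $P$-expectation kills the stochastic integral and yields
\[
E[\log\eta^m_t] \;=\; -\tfrac{1}{2}h_D^2\int_0^t E[(\mu_s-\mu^m_s)^2]\,ds \;=\; -\tfrac{1}{2}h_D^2\int_0^t\gamma^m_s\,ds,
\]
where the second equality uses that when $\zeta=0$ the filtering error $\mu_s-\mu^m_s$ is zero-mean Gaussian with the deterministic Riccati variance $\gamma^m_s$. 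Hence $E[\log\eta_t]=\tfrac{1}{2}h_D^2\int_0^t(\gamma^I_s-\gamma^R_s)\,ds$, and the theorem reduces to $\gamma^I_s\le\gamma^R_s$ for all $s\ge 0$.

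For the final piece I would set $f(t):=\gamma^R_t-\gamma^I_t$ and subtract the two ODEs in Proposition~2.1, obtaining the linear equation
\[
f'(t) \;=\; -\bigl[2\kappa+h_D^2(\gamma^R_t+\gamma^I_t)\bigr]f(t)+h_e^2(\gamma^I_t)^2,\qquad f(0)=0,
\]
whose variation-of-constants solution is manifestly non-negative since $h_e^2(\gamma^I_t)^2\ge 0$. The step I expect to be most delicate is the accounting in the first paragraph: the identification $k=e^R/e^I$ and $X^R_0/X^I_0=e^R/e^I$ hinges essentially on the symmetric initial-condition hypothesis (agreement on beliefs at $t=0$ forces agreement on the price of the endowment), and without it an extra additive constant would appear inside the $\log$ whose sign is not a priori clear. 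A secondary subtlety is that $E[(\mu_s-\mu^I_s)^2]=\gamma^I_s$ relies on $\zeta=0$; once the signal is biased, the drift $\gamma^I_t h_e \zeta$ contributes a squared-bias term that eventually dominates the informational gain, which is consistent with the reversal of the ordering the paper flags for the heavily biased regime.
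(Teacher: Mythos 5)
Your proof is correct, and its first two stages coincide with the paper's: after the normalization (the paper uses $X^m_0=1/(\rho y^m)$ and writes $U^I-U^R=E\big[\int_0^\infty e^{-\rho t}(\log\eta^I_t-\log\eta^R_t)\,dt\big]$, which is exactly your $-\int_0^\infty e^{-\rho t}E[\log\eta_t]\,dt$ since $\eta_t=\eta^R_t/\eta^I_t$), both arguments reduce the claim to $E[(\mu^R_s-\mu_s)^2]\ge E[(\mu^I_s-\mu_s)^2]$. Where you genuinely diverge is in establishing this inequality. The paper never touches the Riccati equations: it notes that $\mu^R_s=E[\mu_s\mid\mathcal{F}^D_s]$ and $\mu^I_s=E[\mu_s\mid\mathcal{F}^D_s\vee\mathcal{F}^e_s]$ are both $\mathcal{F}^D_s\vee\mathcal{F}^e_s$-measurable, so the $L^2$-optimality of conditional expectation gives the inequality in one line. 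You instead identify the two mean-square errors with the deterministic Riccati solutions $\gamma^R_t,\gamma^I_t$ and prove $\gamma^R_t\ge\gamma^I_t$ by variation of constants applied to $f=\gamma^R-\gamma^I$; the computation is correct and has the virtue of quantifying the gap explicitly, but it is tied to the linear--Gaussian specification (and, as you note, to $\zeta=0$ so that Class-I's filter is the correct Bayesian filter under $P$). The projection argument is model-free, which is precisely what the authors exploit later when asserting that the unbiased-case conclusions survive in the two-state Markov-chain business-cycle model, where no Riccati comparison is available. One small correction: your concern that $k=X^R_0/X^I_0=e^R/e^I$ ``hinges essentially on the symmetric initial-condition hypothesis'' is unfounded --- it follows from equilibrium price agreement alone, since $X^m_0=e^m P_0$ with $P_0$ the common stock price, and the paper uses this identity in Appendix C without any such hypothesis.
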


\begin{proof}	
The martingale relation, which holds for the optimal consumption state, is given in the objective probability measure by:

\begin{equation*}
E\left[\int_0^\infty c^m_t \xi_t dt \right]= X^m_0
\end{equation*}
and
\begin{equation*}
u_x(t,c^m_t)=y^m \xi^m_t=y^m \xi_t / \eta^m_t.
\end{equation*}
The above equation can be simplified as:
\begin{equation*}
c^m_t = e^{-\rho t} \frac{\eta^m_t}{y^m \xi_t },
\end{equation*}

\begin{equation*}
X^m_0=E\left[\int_0^\infty e^{-\rho t} \frac{\eta^m_t}{y^m \xi_t } \xi_t dt \right]=\frac{1}{y^m} \int_0^\infty e^{-\rho t}E\left[ \eta^m_t \right] dt=\frac{1}{\rho y^m},
\end{equation*}
and
\begin{equation*}
U^m=E\left[ \int_0^\infty u(t,\frac{c_t}{X_0^m}) dt \right]=E\left[ \int_0^\infty e^{-\rho t} \log\left(e^{-\rho t}\frac{\rho \eta^m_t}{\xi_t } \right) dt \right].
\end{equation*}
Then we have:
\begin{equation*}
U^I-U^R = E\bigg[ \int_0^\infty e^{-\rho t} (\log(\eta_t^I)-\log(\eta_t^R)) dt \bigg].
\end{equation*}
Moreover, we have:
\begin{equation*}
\eta^m_t = \exp\bigg( -\frac{1}{2}\int_0^t (\mu^m_s-\mu_s)^2h_D^2 ds +\int_0^t(\mu^m_s-\mu_s)h_D dW_s \bigg).
\end{equation*}
Then, we write:
\begin{equation*}
E\bigg[ \log(\eta_t^I)-\log(\eta_t^R) \bigg] = \frac{h_D^2}{2}E\bigg[ \int_0^t (\mu^R_s-\mu_s)^2- (\mu^I_s-\mu_s)^2 ds \bigg].
\end{equation*}
If we rewrite $\mu^R_s=E[\mu_s|\mathcal{F}^D_s]$ and $\mu^I_s=E[\mu_s|\mathcal{F}^D_s \vee\mathcal{F}^e_s]$, then in the perspective of conditional expectation, both $\mu^R_s$ and $\mu^I_s$ are $\mathcal{F}^D_s \vee\mathcal{F}^e_s$-measurable. Thus, we have:
\begin{equation*}
E[(\mu^R_s-\mu_s)^2]\geq E[(\mu^I_s-\mu_s)^2].
\end{equation*}
Using this relation, we have:
\begin{equation*}
U^I \geq U^R.
\end{equation*}
\end{proof}

The comparison of the utility of each class of investors depends on their ability to accurately estimate the state of the economy. If the real economy is in the high state, Class-I investors are likely to be more optimistic than Class-R investors, resulting in Class-I borrowing money from Class-R to purchase more risky assets. In this scenario, Class-I investors stand to gain more from the leverage as the risky assets are expected to generate higher dividends. The reverse case holds true when the economy is in the low state.

In the biased case, $\zeta$ is not equal to zero, and we need to consider the effect of a biased signal on the estimation. It is intuitive to assume that a biased signal may negatively impact the utility of Class-I investors. They cannot benefit from an absolute information advantage over Class-R investors as the biased signal may cause Class-I investors to make suboptimal strategies due to inaccurate estimation.

As previously mentioned, $\mu_s^I(\zeta)$ is written as $\mu_s^I$ if $\zeta$ is not considered in the analysis. This applies to other representations such as $U^m$ for $U^m(\zeta)$. Then we have the following theorem.

\begin{theorem}\label{theorem2}
Assuming $\mu^R_0=\mu^I_0=\mu_0$ and $\gamma^R_0 = \gamma^I_0$, then there exist constants $\zeta_1<0$ and $\zeta_2 >0$ such that if $\zeta\in (\zeta_1,\zeta_2)$, then $U^I(\zeta) > U^R(\zeta)$, and if $\zeta\in (-\infty, \zeta_1)\cup(\zeta_2, \infty )$, then $U^I(\zeta) < U^R(\zeta)$.
\end{theorem}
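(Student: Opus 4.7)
The plan is to revisit the derivation in the unbiased case and track how the $\zeta$-dependence propagates through the Class-I filter. The identity $E[\log\eta^m_t] = -\tfrac{h_D^2}{2}E\int_0^t(\mu^m_s-\mu_s)^2\,ds$ is insensitive to the bias, so the same manipulations as in the unbiased-case proof give
\begin{equation*}
U^I(\zeta)-U^R(\zeta)=\frac{h_D^2}{2}\int_0^\infty e^{-\rho t}\int_0^t E\bigl[(\mu^R_s-\mu_s)^2-(\mu^I_s(\zeta)-\mu_s)^2\bigr]\,ds\,dt,
\end{equation*}
so the entire $\zeta$-dependence is packaged inside the Class-I filter error $\mu^I_s(\zeta)-\mu_s$.

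Next I would exploit the linearity of the $\mu^I$-SDE in its real-state form from Section 2.1. Because $\gamma^I_t$ is deterministic, the $\zeta$-dependence enters only through the forcing term $\gamma^I_s h_e\zeta\,ds$, while the diffusion and the $\mu_s$-dependent drift are $\zeta$-free; by linearity I can write
\begin{equation*}
\mu^I_t(\zeta)=\mu^I_t(0)+\zeta\,\phi_t,\qquad \dot\phi_t=-\bigl[\kappa+\gamma^I_t(h_D^2+h_e^2)\bigr]\phi_t+\gamma^I_t h_e,\quad\phi_0=0,
\end{equation*}
so $\phi_t$ is deterministic, bounded and strictly positive for $t>0$. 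A short calculation (subtracting the $\mu$-SDE from the $\zeta=0$ filter SDE and taking expectations under $P$) shows that $E[\mu^I_t(0)-\mu_t]$ satisfies a homogeneous linear ODE with zero initial condition, thanks to $\mu^I_0=\mu_0$, and hence vanishes identically. Squaring the decomposition and taking expectation therefore yields the clean identity
\begin{equation*}
E\bigl[(\mu^I_t(\zeta)-\mu_t)^2\bigr]=E\bigl[(\mu^I_t(0)-\mu_t)^2\bigr]+\zeta^2\phi_t^2.
\end{equation*}

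Plugging back produces the quadratic
\begin{equation*}
U^I(\zeta)-U^R(\zeta)=\bigl[U^I(0)-U^R(0)\bigr]-C\zeta^2,\qquad C=\frac{h_D^2}{2}\int_0^\infty e^{-\rho t}\int_0^t\phi_s^2\,ds\,dt,
\end{equation*}
with $C>0$ finite (boundedness of $\phi$ together with $\rho>0$). Since the unbiased theorem gives $U^I(0)-U^R(0)>0$ strictly when $h_e>0$ (the additional $\sigma$-algebra $\mathcal{F}^e$ is genuinely informative), one simply takes $\zeta_2=\sqrt{[U^I(0)-U^R(0)]/C}$ and $\zeta_1=-\zeta_2$, which is the claimed conclusion with the bonus of symmetry $\zeta_1=-\zeta_2$. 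I expect the main obstacle to be the clean linear decomposition and the cancellation of the cross-term in the squared expression: the decomposition rests on $\gamma^I_t$ being deterministic (hence $\zeta$-free), and the cross-term vanishes only because expectations are computed in the objective measure $P$ (not the subjective $P^I$) and the symmetric initial condition $\mu^I_0=\mu_0$ forces $E[\mu^I_t(0)-\mu_t]\equiv 0$. Once these two points are justified, the remainder is a one-variable parabola argument.
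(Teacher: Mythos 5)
Your proposal is correct, and it rests on exactly the same key observation as the paper's Appendix A: because $\gamma^I_t$ is deterministic, the shift $\Delta\mu^I_t(\zeta):=\mu^I_t(\zeta)-\mu^I_t(0)$ solves a deterministic linear ODE and equals $\zeta\phi_t$, so the cross-term in $E[(\mu^I_t(\zeta)-\mu_t)^2]$ dies once $E[\mu^I_t(0)-\mu_t]\equiv 0$ is established from $\mu^I_0=\mu_0$. Where you differ is in how this observation is deployed. The paper splits the argument into two lemmas: Lemma A.1 shows $U^I(\zeta)<U^R(\zeta)$ in the limits $\zeta\to\pm\infty$, and Lemma A.2 shows $(U^I-U^R)(\zeta)$ is concave with a critical point at $\zeta=0$ (by computing first and second $\zeta$-derivatives of $E[\log\eta^I_t(\zeta)]$); combining these with $(U^I-U^R)(0)>0$ gives the two sign changes. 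You instead integrate the linearity all the way through and obtain the closed form $(U^I-U^R)(\zeta)=(U^I-U^R)(0)-C\zeta^2$ with $C>0$, which subsumes both lemmas at once, produces explicit thresholds, and proves the symmetry $\zeta_1=-\zeta_2$ that the paper only observes numerically ("the case of $\zeta=C$ is similar to $\zeta=-C$"). The one point you state without full detail is strict positivity of $U^I(0)-U^R(0)$; this does hold (with $\gamma^I_0=\gamma^R_0$ and $h_e>0$ the two Riccati equations give $\gamma^I_t<\gamma^R_t$ for $t>0$, so the filtering errors are strictly ordered), and the paper's proof quietly relies on the same strict inequality, so this is not a gap so much as a shared elision.
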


\begin{proof}
	See Appendix A.
\end{proof}

\begin{remark}
	To eliminate the influence of initial estimation, we have assumed that $\mu^R_0=\mu^I_0=\mu_0$ and  $\gamma^R_0 = \gamma^I_0$. In fact, in the case where there is no bias, the assumption that $\mu^R_0=\mu^I_0=\mu_0$ can be relaxed into $\mu^R_0=\mu^I_0$ and all conclusions would still hold. However, in the presence of bias, this condition cannot be relaxed. As an example, consider a case with parameter given $\zeta=10$, $\mu^R_0=\mu^I_0= \bar{\mu}$, $\mu_0=\bar{\mu}+10$ and $\rho=100$. It is important to note that $U^R$ may not necessarily be greater than $U^I$ because the optimistic signal can cause Class-I investors to estimate the economy state more optimistically at the beginning of the time, thereby resulting in more accurate estimations of the real economy.
\end{remark}

It is important to mention another detail beforehand: if we consider them as functions of $e^R$ (which will be used in a later subsection), the difference between them remains constant.

\subsection{Comparison of each class's welfare and benchmark welfare}

Additionally, we are not only comparing the utility between Class-R and Class-I, but also introducing the benchmark utility, which is achieved with no trading in the market. This benchmark is set by passive investors who only consume the dividend from the risky asset. By comparing the welfare of both Classes with the benchmark utility, we can see whether both Classes can actually benefit themselves through trading. The welfare $U^{bench}$ is defined as the expected value of the integral over time of the utility function $u$ with respect to the fraction of the risky asset's dividend, as follows:

\begin{equation*}
U^{bench}:=E\bigg[ \int_0^\infty u(t,\frac{D_t}{P_0}) dt \bigg].
\end{equation*}
Firstly, we present the following lemma:
\begin{lemma}\label{lemma1}
	$U^{bench}$ is constant for any $\zeta$.
\end{lemma}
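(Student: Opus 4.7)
The plan is to split $U^{bench}$ into a purely dividend‐driven piece and a piece depending only on the initial price $P_0$, and then to observe that neither depends on $\zeta$.

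First I would write
\begin{equation*}
U^{bench}=E\!\left[\int_0^\infty e^{-\rho t}\log D_t\,dt\right]-\frac{\log P_0}{\rho}.
\end{equation*}
The first term is computed entirely under the objective measure $P$, and the dividend dynamics $dD_t=\mu_t D_t\,dt+\sigma_D D_t\,dW_t$ together with the mean-reverting SDE for $\mu_t$ are specified objectively and involve neither the signal $e_t$ nor the bias $\zeta$. Solving explicitly gives $\log D_t=\log D_0+\int_0^t(\mu_s-\tfrac12\sigma_D^2)ds+\sigma_D W_t$, whose expectation only uses $E[\mu_s]=\bar\mu+(\mu_0-\bar\mu)e^{-\kappa s}$; so that integral is a $\zeta$-free constant.

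The substantive step is to show $P_0$ does not depend on $\zeta$ either, and this is where the log-utility structure does the work. For Class-$m$ with log preferences, the first-order condition and budget constraint from the proof of Theorem \ref{theorem1} give $c^m_t=e^{-\rho t}/(y^m\xi^m_t)$ with $y^m=1/(\rho X_0^m)$; the subjective wealth identity
\begin{equation*}
X^m_t\,\xi^m_t=E^m_t\!\left[\int_t^\infty c^m_s\,\xi^m_s\,ds\right]=\int_t^\infty \frac{e^{-\rho s}}{y^m}\,ds=\frac{e^{-\rho t}}{\rho y^m}
\end{equation*}
then yields the classical log-utility relation $c^m_t=\rho X^m_t$ for $m=R,I$. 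Adding over classes and using market clearing $c^R_t+c^I_t=D_t$ together with the fact that the riskless asset is in zero net supply (so total investor wealth equals the value of the stock, $X^R_t+X^I_t=P_t$) gives $D_t=\rho P_t$, i.e.\ $P_t=D_t/\rho$. In particular $P_0=D_0/\rho$, which is manifestly independent of $\zeta$.

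Plugging $P_0=D_0/\rho$ back into the decomposition completes the proof. The only place I would have to be careful is the equilibrium identification $X^R_t+X^I_t=P_t$: one must invoke clearing of the bond market to conclude that aggregate financial wealth equals the stock's market value, so that the log-utility consumption rule transfers cleanly from the individual level to the aggregate. Everything else is a direct consequence of the objective dividend dynamics and Theorem \ref{theorem1}.
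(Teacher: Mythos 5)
Your proof is correct, and it reaches the key identity $P_t=D_t/\rho$ by a genuinely different route than the paper. The paper starts from the pricing-kernel formula $\xi_t P_t=E\left[\int_t^\infty \xi_s D_s\,ds\,\big|\,\mathcal{F}_t\right]$, rewrites $D_s$ in terms of the convex combination $\eta^{bench}_t=\frac{y^I}{y^R+y^I}\eta^R_t+\frac{y^R}{y^R+y^I}\eta^I_t$, and verifies by an explicit SDE computation that $\eta^{bench}$ is an exponential martingale under the objective measure, whence the price--dividend ratio is $1/\rho$; that martingale object is then reused in the proof of Theorem \ref{theorem3}. You instead exploit the log-utility structure directly: the subjective budget identity gives $X^m_t\xi^m_t=e^{-\rho t}/(\rho y^m)$ and hence the consumption rule $c^m_t=\rho X^m_t$, and aggregating with market clearing and the zero-net-supply bond yields $D_t=\rho P_t$ without ever touching $\eta^{bench}$. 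Your argument is more elementary and arguably more transparent about \emph{why} the result holds (it is the myopic log-investor consumption rule, not anything about beliefs), at the cost of leaning on the equilibrium identity $X^R_t+X^I_t=P_t$ --- which you correctly flag, and which the paper itself invokes at $t=0$ when it writes $P_0=X^R_0+X^I_0$. Your front-end decomposition of $U^{bench}$ into the $\zeta$-free dividend integral plus $-\rho^{-1}\log P_0$ matches the paper's concluding observation that $U^{bench}=E\left[\int_0^\infty u\left(t,\rho D_t/D_0\right)dt\right]$ is a constant.
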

\begin{proof}
See Appendix B.
\end{proof}

The exact relation between $U^I$, $U^R$, and $U^{bench}$ is not fixed. It is possible to conjecture that the benchmark welfare $U^{bench}$ lies between $U^I$ and $U^R$ as passive investors do not gain or lose from information. However, it is not always the case that both $U^I<U^{bench}$ and $U^R<U^{bench}$ exist simultaneously, even in situations where Class-I investors have an absolute information advantage. Here we introduce the most dominant factor, which is the initial proportion of each class. The following theorem describes the relationship between $U^{I}$, $U^{R}$, and $U^{bench}$ with respect to changes in the initial proportion of Class-R investors.

\begin{theorem}\label{theorem3}
	In the unbiased case, assuming that $\mu^R_0=\mu^I_0=\mu_0$ and $\gamma^R_0 = \gamma^I_0$. If the initial proportion of Class-R investors $e^R$ increases, $U^{bench}$ remains constant, but $U^R$ and $U^I$ initially decrease and then increase. Mathematically, if we treat $U^R$ and $U^I$ as functions of $e^R$, then $U^{I'}(0)< 0$, $U^{I'}(1)> 0$, and $U^{I}(e^R)$ is convex. The property still exists for $U^R$ as the gap between $U^I$ and $U^R$ is constant. 
	
	Furthermore, the conclusion still holds for the biased case.
\end{theorem}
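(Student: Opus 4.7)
The plan is to reduce the theorem to the analysis of a single function $f(e^R)$ that captures the entire $e^R$-dependence of the welfare, and then to verify separately convexity and the signs of the endpoint derivatives. Starting from the log-utility first-order condition $c^m_t = e^{-\rho t}/(y^m\xi^m_t)$, the budget $X^m_0 = 1/(\rho y^m)$, and the identity $\eta^m_t = \xi_t/\xi^m_t$, I get $c^m_t/X^m_0 = \rho e^{-\rho t}\eta^m_t/\xi_t$. Market clearing $c^R_t + c^I_t = D_t$ then yields $\xi_t D_t = e^{-\rho t}\rho P_0 Z_t$ with $Z_t := e^R\eta^R_t + (1-e^R)\eta^I_t$, and matching $c^R_0 = \lambda_0 D_0 = e^R D_0$ (using $\lambda_0 = e^R$ from Theorem~\ref{theorem1}) with $c^R_0 = \rho X^R_0 = \rho e^R P_0$ pins down $P_0 = D_0/\rho$, independent of both $e^R$ and $\zeta$. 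Substituting gives the key decomposition
\[
U^m(e^R) = C_0 + \int_0^\infty e^{-\rho t} E[\log\eta^m_t]\,dt + f(e^R),\qquad f(e^R) := -\int_0^\infty e^{-\rho t} E[\log Z_t]\,dt,
\]
where $C_0$ collects terms independent of both $e^R$ and $m$. Constancy of $U^{bench}$ in $e^R$ (recovering Lemma~\ref{lemma1}) and of the gap $U^I - U^R$ in $e^R$ are then immediate, and the theorem reduces entirely to convexity and endpoint-slope properties of $f$.

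Convexity of $f$ is the easy half: $Z_t$ is affine in $e^R$ and $\log$ is strictly concave, so $e^R\mapsto\log Z_t$ is pathwise concave, hence $f$ is convex. Strictness holds for $t>0$ under the initial-condition normalization because the signal is informative ($h_e>0$), so $P(\eta^R_t\neq \eta^I_t)>0$.

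The substantive step is the sign of the endpoint derivatives,
\[
f'(0) = -\int_0^\infty e^{-\rho t}\bigl(E[\eta^R_t/\eta^I_t]-1\bigr)\,dt,\qquad f'(1) = \int_0^\infty e^{-\rho t}\bigl(E[\eta^I_t/\eta^R_t]-1\bigr)\,dt,
\]
so $f'(0)<0<f'(1)$ is equivalent to the two strict reverse-Jensen inequalities $E[\eta^R_t/\eta^I_t]>1$ and $E[\eta^I_t/\eta^R_t]>1$ for $t>0$. I expect this to be the main obstacle, because $\eta_t := \eta^R_t/\eta^I_t$ has mean $1$ only under $P^I$, and translating that subjective-measure martingale property into a strict inequality under $P$ requires a careful three-measure argument. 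The plan is to compute the dynamics of $\eta_t$ also under $P^R$: the Girsanov shift $d\hat{W}^R_t = d\hat{W}^I_t + h_D(\mu^I_t - \mu^R_t)\,dt$ produces a strictly positive drift $h_D^2(\mu^R_t - \mu^I_t)^2\,dt$, so $E^{P^R}[\eta_t]>1$ for $t>0$. Interpolating between $P^R$ and $P^I$ via the convex combination $Z_t$ (which satisfies $E^P[Z_t]=1$) then yields the required strict $P$-inequality, and the mirror calculation with $1/\eta_t$ handles the other endpoint. The biased-case extension is automatic from the same computation, with the extra drift $\gamma^I_t h_e\zeta\,dt$ in $\mu^I$ carried through unchanged.
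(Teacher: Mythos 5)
Your reduction to the single function $f(e^R)=-\int_0^\infty e^{-\rho t}E[\log Z_t]\,dt$ with $Z_t=e^R\eta^R_t+(1-e^R)\eta^I_t$, the constancy $P_0=D_0/\rho$, the pathwise-concavity proof of convexity, and the endpoint-derivative formulas all coincide with the paper's own route (Appendix C works with $U^I-U^{bench}$, where $\eta^{bench}_t=Z_t$). The genuine gap is in the step that signs $f'(0)$ and $f'(1)$. What is needed there is $E^{P}\left[\eta^R_t/\eta^I_t\right]>1$ and $E^{P}\left[\eta^I_t/\eta^R_t\right]>1$ under the \emph{objective} measure $P$. Your drift computation gives $E^{P^R}\left[\eta^R_t/\eta^I_t\right]>1$ and the martingale property gives $E^{P^I}\left[\eta^R_t/\eta^I_t\right]=1$, but the proposed ``interpolation via $Z_t$'' does not bridge to $P$: $E^P[Z_t]=1$ only says that $e^RP^R+(1-e^R)P^I$ is a probability measure with $P$-density $Z_t$; it does not represent $P$ itself as a mixture of $P^R$ and $P^I$, so inequalities valid under the two subjective measures carry no information about the $P$-expectation. (Concretely, $E^{P^R}\left[\eta^R_t/\eta^I_t\right]=E^{P}\left[(\eta^R_t)^2/\eta^I_t\right]$, which is not the quantity appearing in $f'(0)$.)

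The paper closes this step with an objective-measure argument that your proposal never invokes: write $\eta^R_t/\eta^I_t$ as $\exp\left(\frac12\int_0^t\left((\mu^R_s-\mu^I_s)^2+(\mu^I_s-\mu_s)^2-(\mu^R_s-\mu_s)^2\right)h_D^2\,ds\right)$ times a stochastic exponential, then use the projection structure $\mu^R_s=E[\mu_s|\mathcal{F}^D_s]$, $\mu^I_s=E[\mu_s|\mathcal{F}^D_s\vee\mathcal{F}^e_s]$, $\mu^R_s=E[\mu^I_s|\mathcal{F}^D_s]$ to obtain the Pythagorean identity $E\left[(\mu^R_s-\mu^I_s)^2+(\mu^I_s-\mu_s)^2-(\mu^R_s-\mu_s)^2\right]=0$ (and strict positivity with the roles of $R$ and $I$ exchanged), and finish with Jensen's inequality. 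This tower-property input is precisely what makes the inequalities hold under $P$, and it is also what must be rechecked in the biased case, where the paper adds the deterministic-shift contributions $2E(\Delta\mu^I_t(\zeta))^2$ (respectively $0$) to the two identities; so the biased extension is not ``automatic'' from a subjective-measure drift computation either. If you replace your interpolation step with this objective-measure Jensen argument, the rest of your write-up matches the paper's proof.
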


\begin{proof}
See Appendix C.
\end{proof} 

Here are some figures representing their utility. As we can see, it is evident that when the value of $e^R$ is in close proximity to zero, the welfare of both investors falls below the benchmark welfare. Additionally, as the value of $\zeta$ increases or decreases, it becomes apparent that $U^R>U^I$, and the occurrence of double loss persists when $e^R$ is in close proximity to one.
\begin{figure}
  \centering
  \begin{subfigure}[ht]{0.8\textwidth}
    \includegraphics[width=\textwidth]{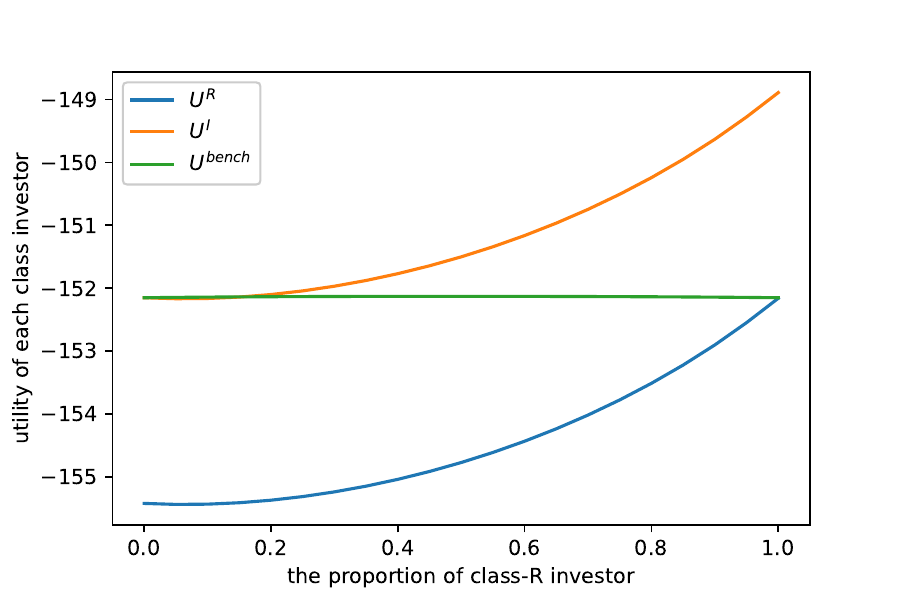}
    \caption{$\zeta=0$ .}
  \end{subfigure}
  \centering
  \begin{subfigure}[ht]{0.48\textwidth}
    \includegraphics[width=\textwidth]{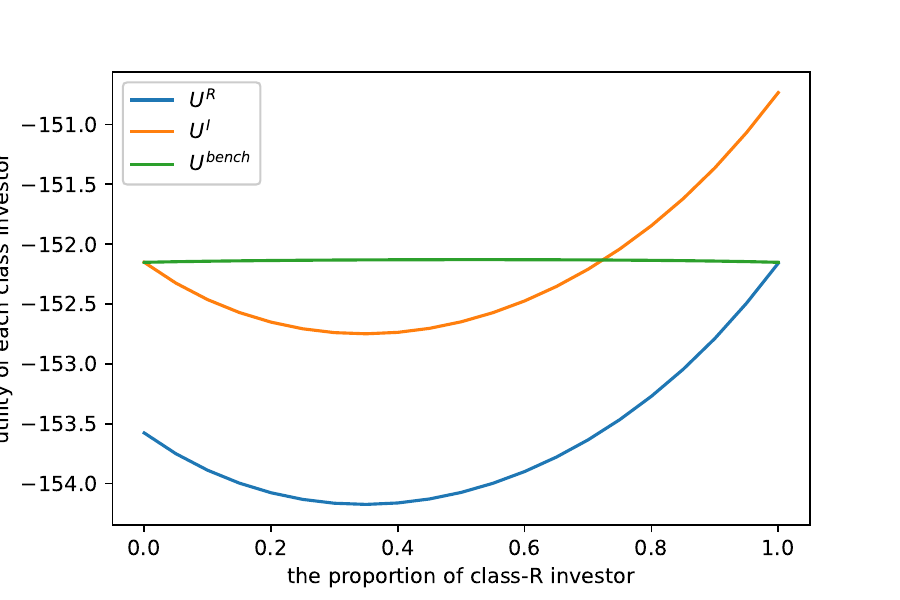}
    \caption{$\zeta=0.5$ .}
  \end{subfigure}
  \hfill
  \begin{subfigure}[ht]{0.48\textwidth}
    \includegraphics[width=\textwidth]{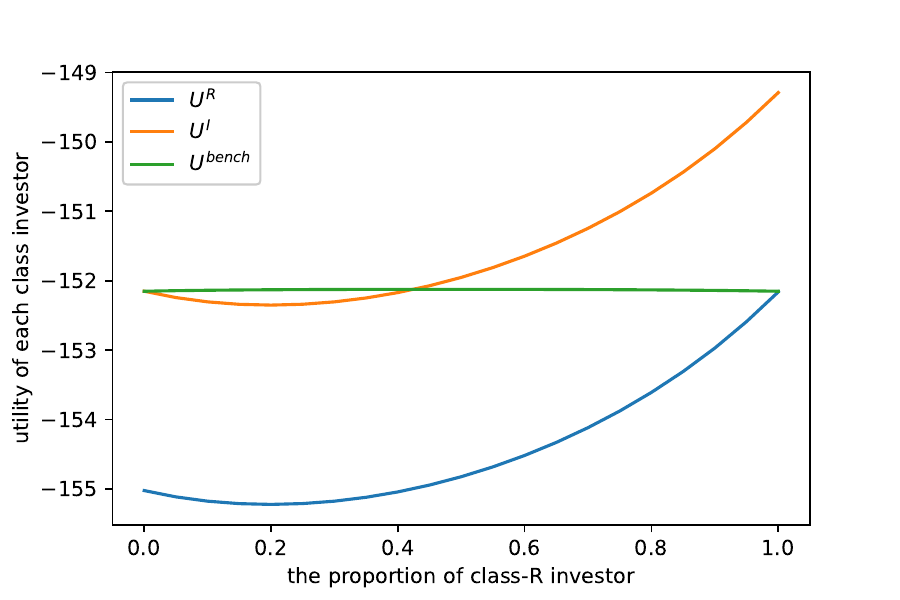}
    \caption{$\zeta=-0.5$ .}
  \end{subfigure}
  \centering
  \begin{subfigure}[h]{0.48\textwidth}
    \includegraphics[width=\textwidth]{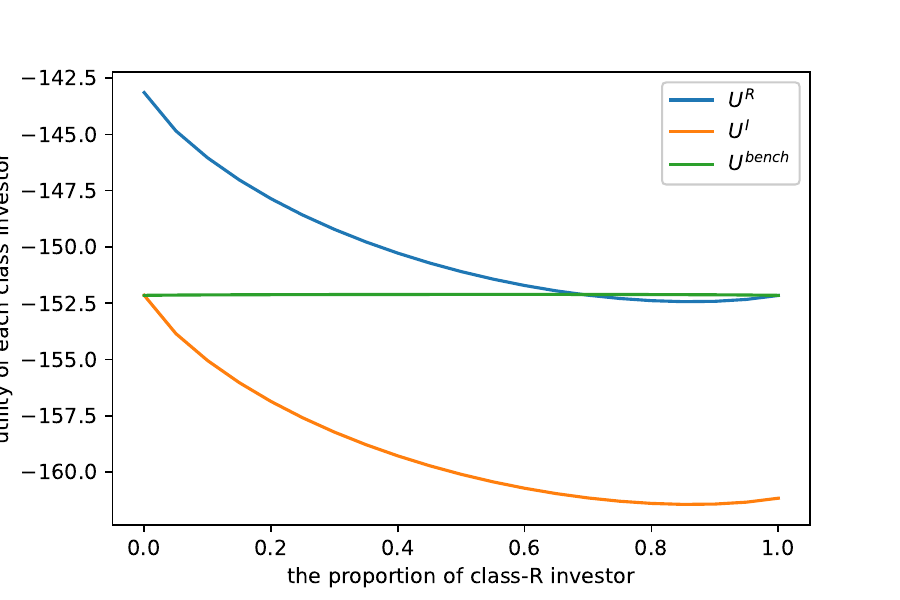}
    \caption{$\zeta=1.5$ .}
  \end{subfigure}
  \hfill
  \begin{subfigure}[h]{0.48\textwidth}
    \includegraphics[width=\textwidth]{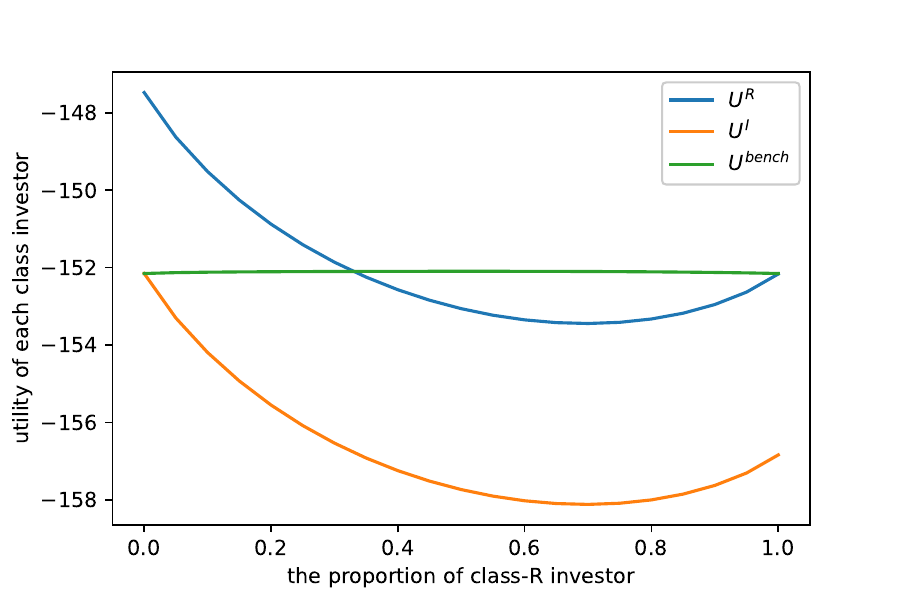}
    \caption{$\zeta=-1.5$ .}
  \end{subfigure}
  \centering
  \begin{subfigure}[h]{0.48\textwidth}
    \includegraphics[width=\textwidth]{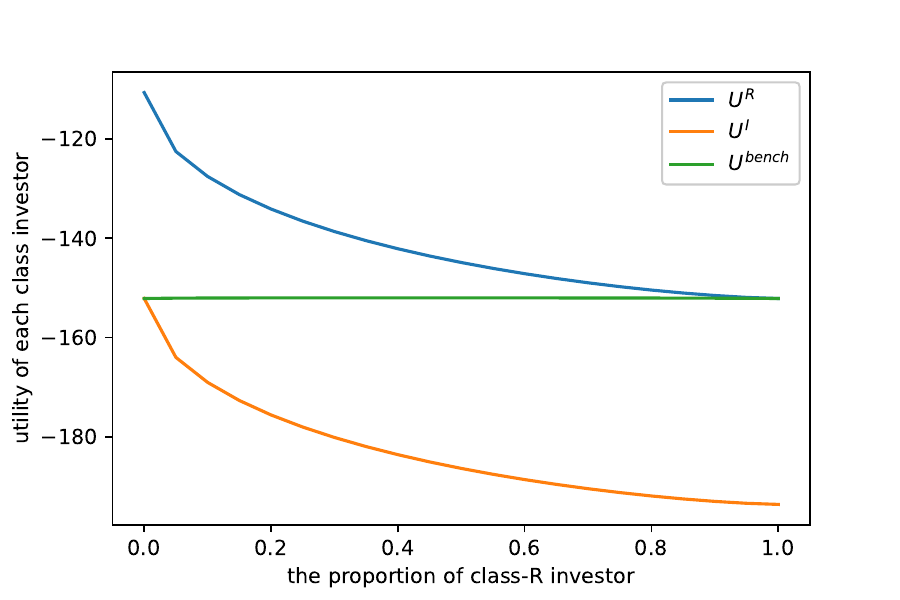}
    \caption{$\zeta=3$ .}
  \end{subfigure}
  \hfill
  \begin{subfigure}[h]{0.48\textwidth}
    \includegraphics[width=\textwidth]{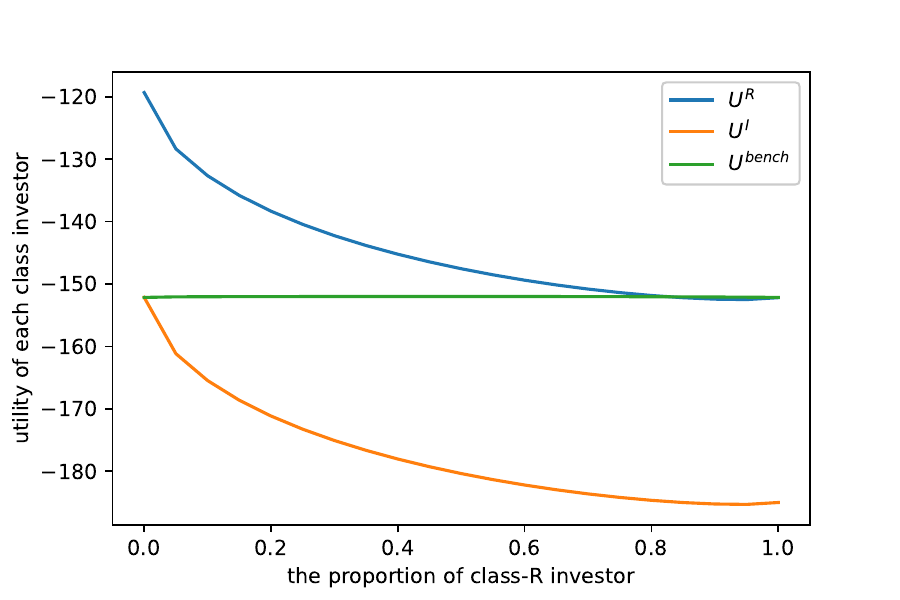}
    \caption{$\zeta=-3$ .}
  \end{subfigure}
\caption{welfare analysis in mean-reverting model.}
\end{figure}

It is counterintuitive to discover that $U^I$ does not always increase in the unbiased case with more Class-R investors in the market, with whom the Class-I investors can trade speculatively and take advantage. There is no doubt that this advantage exists, but it is careless to ignore the effect of market inefficiency.

Here, we define $U^{total}$ as:
\begin{equation*}
U^{total}:= e^R E\bigg[\int_0^\infty u(t,\frac{c_t^R}{e^R})dt\bigg] +e^I E\bigg[\int_0^\infty u(t,\frac{c_t^I}{e^I})dt\bigg]
\end{equation*}
which measures the total welfare of all investors in the market. Assuming there exists a central planner whose objective is to maximize $U^{total}$ with the constraint $c^R_t+c^I_t=D_t$, the maximization is achieved if $c^R_t=e^R D_t$ and $c^I_t=e^I D_t$. In other words, no trading in the market and investors only consuming their initial proportion of dividend is optimal for total welfare. Unfortunately, the equilibrium in the model where investors trade to improve utility in their own subjective probability measure is not optimal for total welfare, except when there is only one class of investors, i.e., $e^R=0$ or $e^R=1$.

Furthermore, we can calculate that $U^{total} = e^IU^{I}+ e^RU^{R}+ C$, where $C$ is a constant with respect to $e^R$. Thus, $U^{I'}(e^R)= (U^I-U^R)-\frac{dU^{total}}{de^R}$. Therefore, when the market is dominated by Class-I investors, and few Class-R investors enter the market, the total welfare decreases and market efficiency is damaged. This effect $\frac{dU^{total}}{de^R}$ dominates the slight welfare improvement $(U^I-U^R)$ from the advantage taken from a small number of Class-R investors, and Class-I investors' utility decreases, i.e., $U^{I'}(0)< 0$.

Similarly, in the extremely biased case where Class-R investors can trade speculatively and take advantage of Class-I's inaccurate estimation, but few Class-I investors are present in the market, the utility of Class-R investors' welfare is damaged as more Class-I investors enter the market due to market inefficiency. By the symmetry, we have $U^{R'}(e^I)= (U^R-U^I)-\frac{dU^{total}}{de^I}$ (note that we write it as a function of $e^I$) and $(U^R-U^I)|_{e^I=0} < \frac{dU^{total}}{de^I}|_{e^I=0}$. That is to say, the effect of market efficiency also dominates the welfare improvement of Class-R investors through trading.

Overall, the discussion of total welfare implies that trading in the homogenous preference case is a negative-sum game. Although trading speculatively through others' inaccurate estimation can truly gain welfare improvement, the improvement may not cover the loss from market inefficiency. That is to say, in some situations, the welfare of both classes is less than the benchmark welfare that investors can achieve through no trading in the market. This outcome is a double loss for all.

\begin{remark}
	Here we explore the differences in conclusions compared to (\cite{he2017index}), where there are some similarities in the results but the premise are totally different. In their paper, investors hold biased beliefs of constant growth rates, and this bias is correlated with the amounts of risky assets they possess. The reason why their conclusion about investors' utility not exceeding passive investors' utility is that when the bias correlation is closer to $-1$, the ``average" beliefs of passive investors align more closely with the real economy's growth rate.

In contrast, our model features a unique risky asset, and the Class-I investors' estimations of the growth rate are absolutely closer to those of Class-R investors due to the relationship of the conditional expectation. Our conclusion is based on considering the initial proportion, whereas their conclusion is derived under the assumption that all investors have the same endowment.
\end{remark}

\begin{remark}
Here, we aim to examine the generalizability of the conclusions drawn in the study. For instance, the findings still hold even if the bias degree $\zeta$ is not a constant and can be any specific process, as long as it is independent of the economy process and signal process. A determined process can satisfy this condition. Additionally, it is not necessary for the signal process to be present in the market at all times. For example, a truncated signal where $e_t \equiv 0, t\geq T$ for a certain duration of time $T$, does not affect the conclusions drawn.
	
Nevertheless, it is important to note that all conclusions stated in  \textbf{Theorem \ref{theorem3}} may not hold for any economic models. For instance, let's consider another widely used business cycle model as an example. The expected dividend growth rate $\mu$ is unobservable and follows a two-state continuous Markov chain:
\begin{equation*}
	\mu_t \in \{ \mu^h, \mu^l \} \text{ with generator matrix }
	\begin{pmatrix}
	-\lambda & \lambda \\
	\psi & -\psi
	\end{pmatrix} 
\end{equation*}
where $\lambda>0$ and $\psi>0$ are the intensities of moving from the high to low state and from the low to high state, respectively.

By following a similar deductive approach as presented in \cite{veronesi2000does} or \cite{cujean2017does}, we calculate the processes $\mu^R$ and $\mu^I$ as follows:
\begin{equation*}
d\mu^R_t = (\lambda+\psi)(\mu_\infty-\mu^R_t)dt + h_D^2\nu(\mu^R_t)(\mu_t-\mu^R_t) dt+\nu(\mu^R_t)h_D dW_t,
\end{equation*}
\begin{align*}\nonumber
d\mu^I_t &= (\lambda+\psi)(\mu_\infty-\mu^I_t)dt + (h_D^2+h_e^2)\nu(\mu^I_t)(\mu_t-\mu^I_t) dt\\
& \ \ \ +\nu(\mu^I_t)h_e \zeta dt+\nu(\mu^I_t)(h_D dW_t+h_e dB_t),
\end{align*}
where $\nu(\cdot):= (\mu^h -\cdot) (\cdot-\mu^l)$.

\begin{figure}
  \centering
    \includegraphics[width=\textwidth]{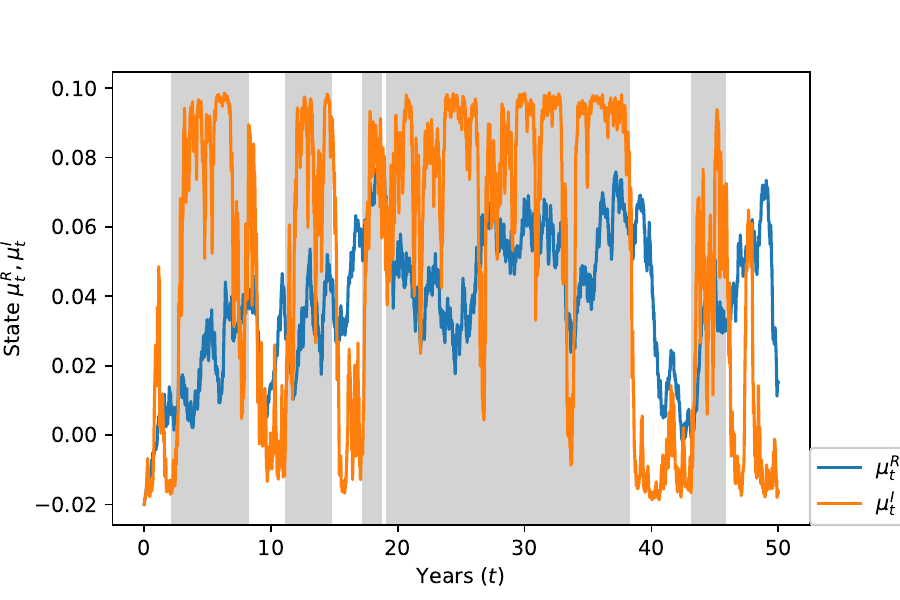}
    \caption{estimation evolution in state altering model.}
 \end{figure}

\begin{figure}
  \centering
    \includegraphics[width=\textwidth]{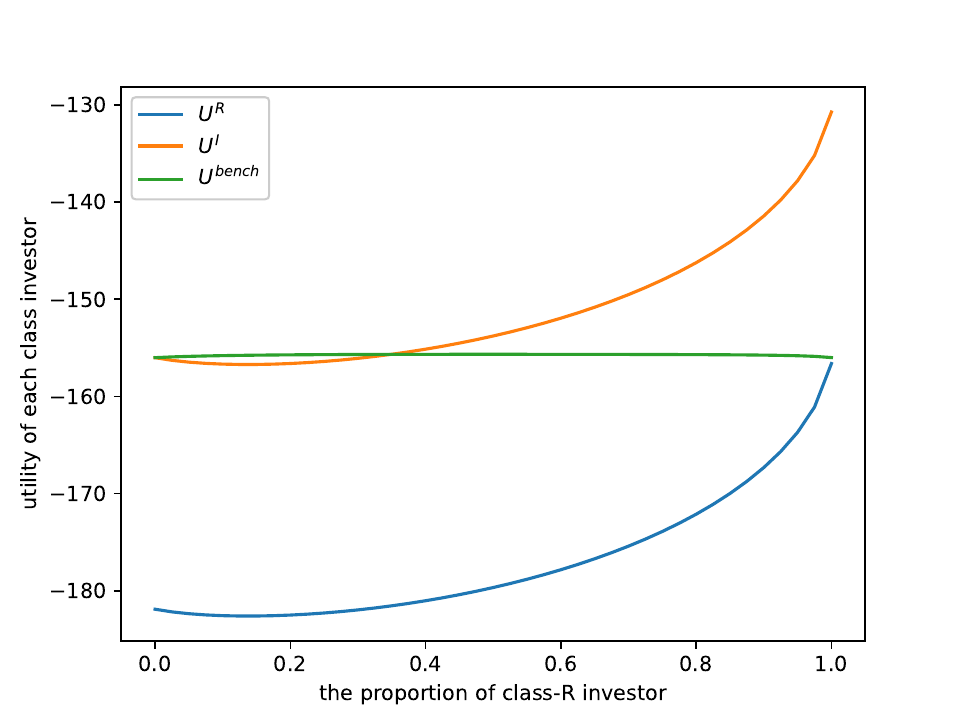}
    \caption{welfare analysis in state altering model.}
 \end{figure}
\end{remark}

Unlike the mean-reverting model, the evolution of $\mu^R_t$ and $\mu^I_t$ follows a nonlinear model, which results in $\frac{d\mu^I_s(\zeta)}{d\zeta}|_{\zeta=0}$ not necessarily being independent of $\mu_s + \mu^R_s-2\mu^I_s(0)$ while they are independent in mean-reverting model. Consequently, there exists a value of $\zeta$ for which $E\big((\mu^R_s-\mu^I_s)^2+(\mu^I_s-\mu_s)^2 -(\mu^R_s-\mu_s)^2 \big) <0$. By observing this, we can provide a counterexample with the following settings: $\mu^h=0.1$, $\mu^l=0.099$, $\lambda=0.2$, $\psi=0.2$, $\mu^R_0 = \mu^I_0 = \mu_0=\mu^l$, and $\rho=100$. Through numerical calculations, we find that $U^{I'}(-0.01)(e^R)|_{e^R=0}>0$. The setting of $\rho=100$ ensures that $\text{sign}(U^{I'}) = \text{sign}(1-E[\frac{\eta^R_{0^+}}{\eta^I_{0^+}}])$, while the setting $\mu^h=0.1$, $\mu^l=0.099$ guarantees that $\text{sign}(E[\frac{\eta^R_{0^+}}{\eta^I_{0^+}}]-1) = \text{sign} \big((\mu^R_{0^+}-\mu^I_{0^+})^2+(\mu^I_{0^+}-\mu_{0^+})^2 -(\mu^R_{0^+}-\mu_{0^+})^2 \big)$, as we can approximate $\exp(x)-1$ with $x$ if $x$ is close to zero. Hence, $U^{I'}(\zeta)(e^R)|_{e^R=0}$ is no longer negative. Nevertheless, this counterexample is deliberately constructed, with more realistic setting that $\mu^h=0.1$, $\mu^l=-0.2$ and $\rho=0.02$, the phenomenon of double loss still happens for any $\zeta$ upon numerical examination. 

At last, the conclusion stated in  \textbf{Theorem \ref{theorem3}} remains valid in the unbiased case for any economic models. Here are the figures about the evolution of the estimations where the background color reflects the actual economic state, where grey refers to the high state and white to the low state and their utility. This is due to the fact that we only rely on the property of conditional expectation just as the proof shown. The phenomenon of double loss still occurs in the unbiased case no matter what economic models we adopt.

\section{Optimal strategy of all-knowing investors}
We now turn our attention to a new problem that builds upon the previous content. In this scenario, there are three types of investors in the market: Class-I investors, Class-R investors, and an all-knowing investor who is aware of the true state of the economy and the degree of signal biases. Passive investors are excluded as they do not affect the market. We assume that the all-knowing investor's strategies are restricted to deciding whether to be a Class-I, Class-R, or passive investor in order to achieve the highest utility. For various reasons, these investors are unable to trade in the market over time due to a lack of market access in certain areas or a lack of attention. Therefore, they delegate their funds to institutions that belong to a certain class.

The answer is straightforward - choose the strategy that maximizes $U^R$, $U^I$, or $U^{bench}$, i.e., $\arg\max\limits_{R,I,ben}(U^R, U^I, U^{bench})$, depending on which investor can achieve the highest utility. We will not solve this numerically, but instead demonstrate it through various settings of the model.

Furthermore, we assume that the all-knowing investors possess an unbiased external signal and can manipulate it to create a biased version into the market. It is reasonable to assume that the degree of bias in the signal can influence the initial proportion of investors in each class, meaning that $e^R$ is a function of $\zeta$. To construct a reasonable function, we observe that if the signal is more biased, fewer investors are likely to trust it. Moreover, if $\zeta$ approaches positive or negative infinity, then no investors will trust the signal. That is to say, $e^{R'}(\zeta)<0$ if $\zeta<0$, $e^{R'}(\zeta)>0$ if $\zeta>0$ and $e^R(\pm \infty) = 1$. 

We define the function set $\mathcal{E}(a,b)$ as follows:
\begin{equation*}
e^R(\zeta) := 1 - a \exp( -b \zeta^2),
\end{equation*}
where $a$ represents the proportion of Class-I investors in the unbiased case, and $b$ represents the rate of decay at which Class-I investors stop trusting the signal and become Class-R investors as the signal becomes more biased. This function satisfies the condition mentioned earlier.

\begin{figure}
  \centering
  \begin{subfigure}[ht]{0.48\textwidth}
    \includegraphics[width=\textwidth]{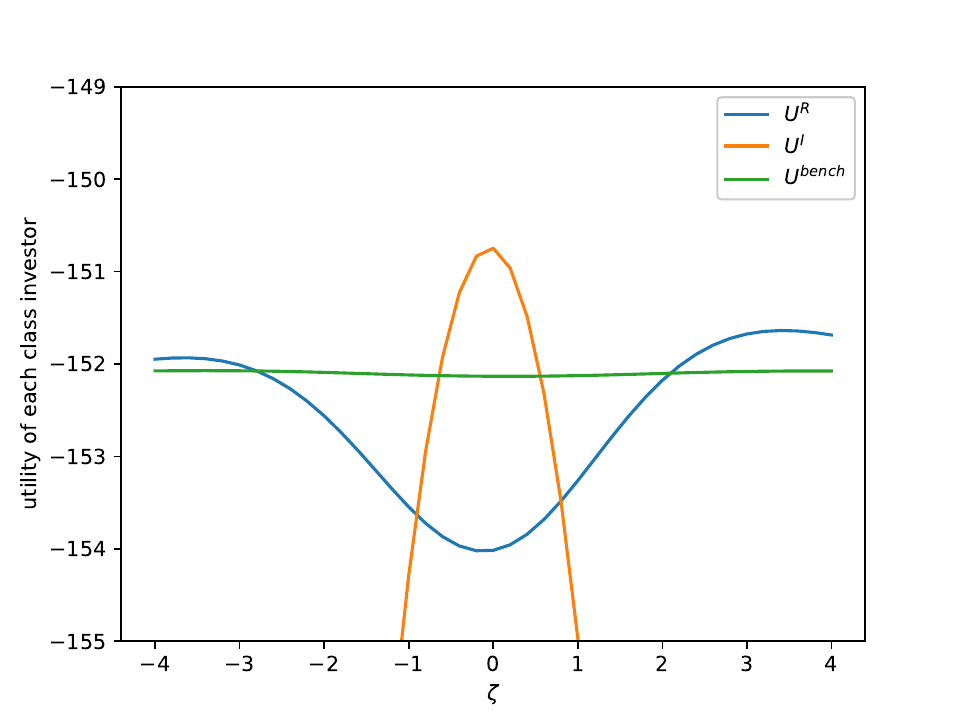}
    \caption{$\mathcal{E}(0.3,0.1)$ .}
  \end{subfigure}
  \hfill
  \begin{subfigure}[ht]{0.48\textwidth}
    \includegraphics[width=\textwidth]{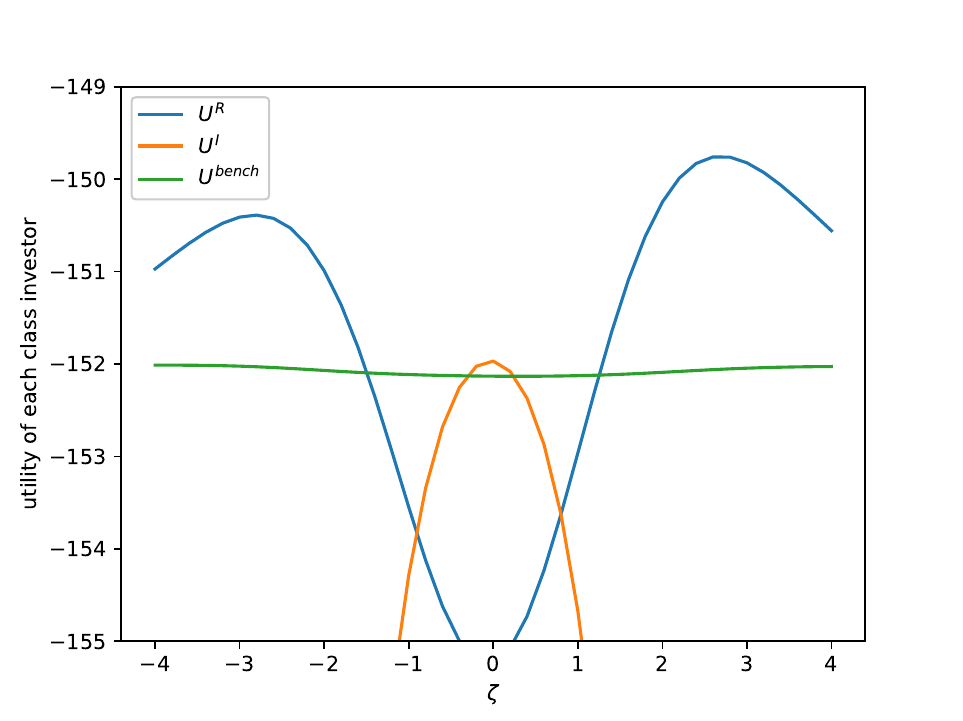}
    \caption{$\mathcal{E}(0.7,0.1)$ .}
    \label{fig:figure2}
  \end{subfigure}
  \centering
  \begin{subfigure}[h]{0.48\textwidth}
    \includegraphics[width=\textwidth]{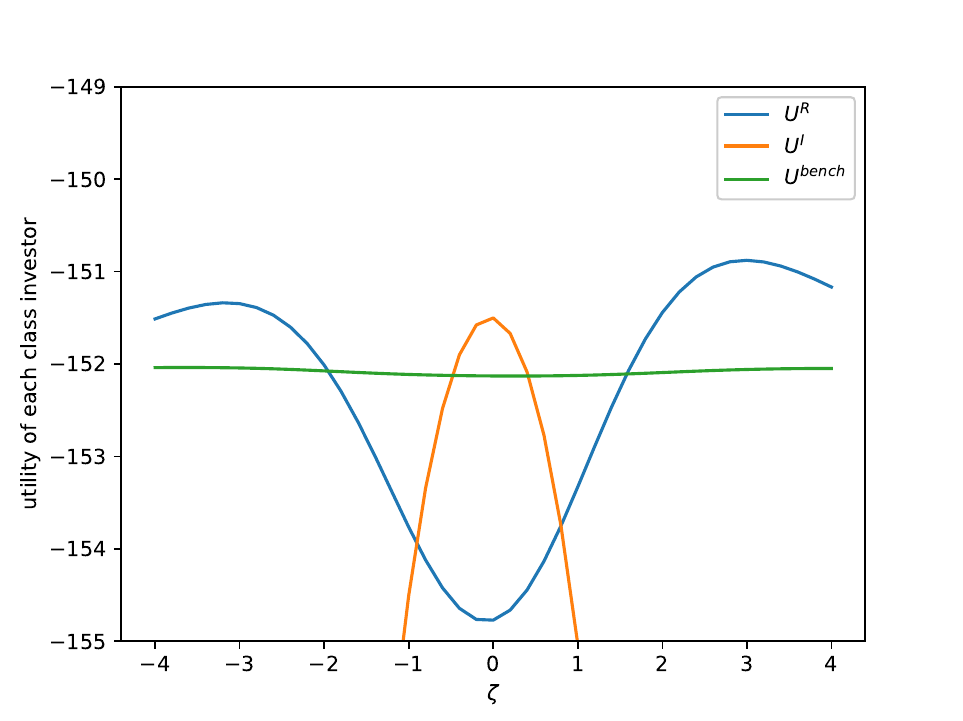}
    \caption{$\mathcal{E}(0.5,0.1)$ .}
  \end{subfigure}
  \hfill
  \begin{subfigure}[h]{0.48\textwidth}
    \includegraphics[width=\textwidth]{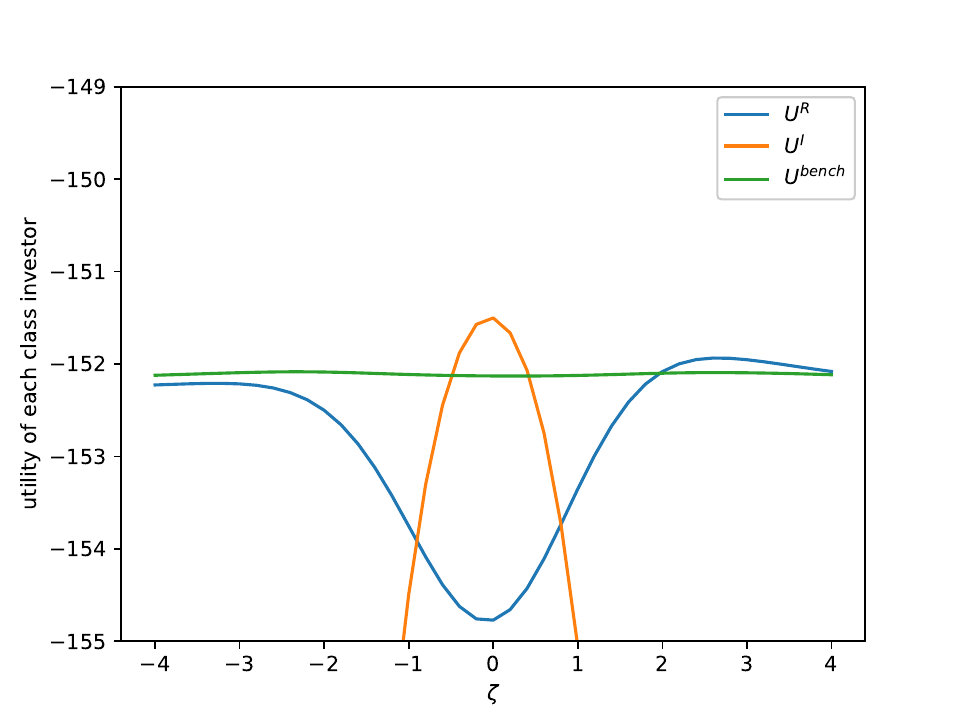}
    \caption{$\mathcal{E}(0.5,0.2)$ .}
    \label{fig:figure2}
  \end{subfigure}
  \centering
  \begin{subfigure}[h]{0.48\textwidth}
    \includegraphics[width=\textwidth]{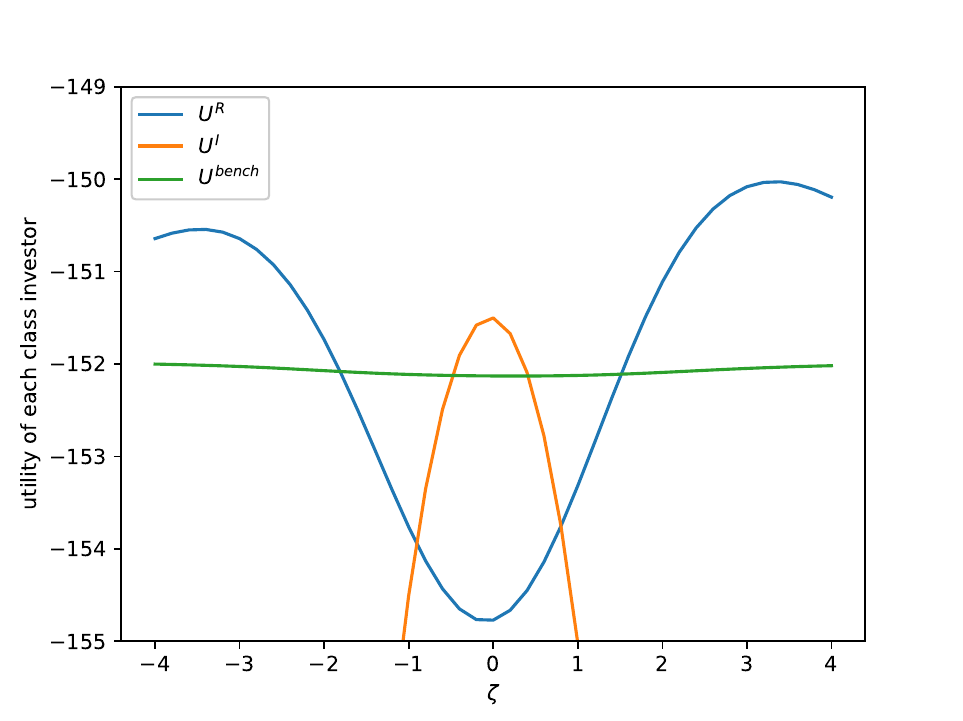}
    \caption{$\mathcal{E}(0.5,0.07)$ .}
  \end{subfigure}
  \hfill
  \begin{subfigure}[h]{0.48\textwidth}
    \includegraphics[width=\textwidth]{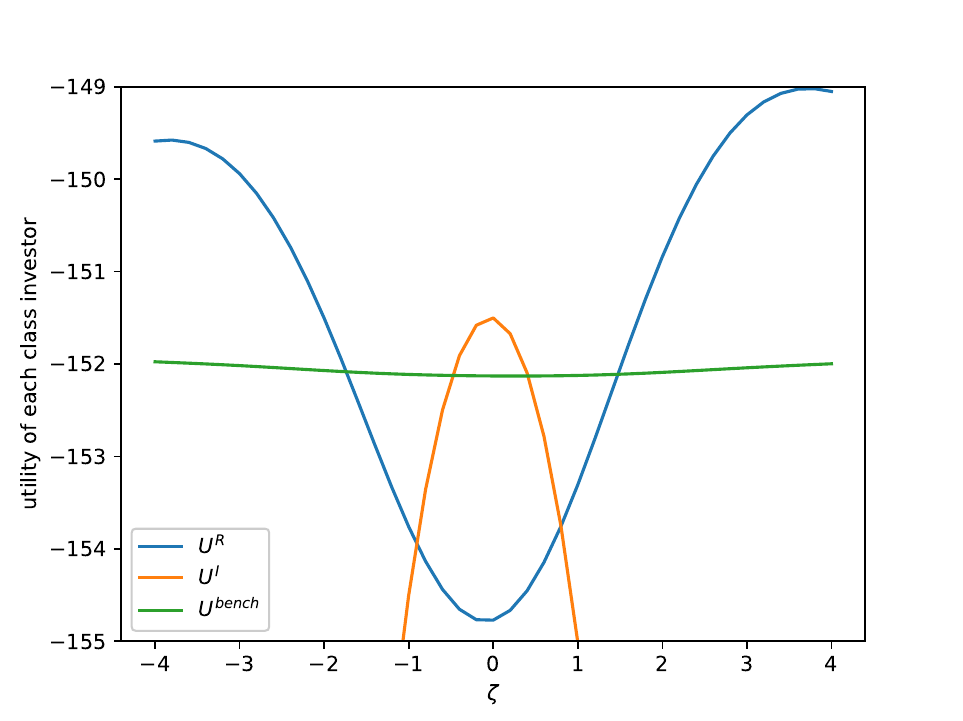}
    \caption{$\mathcal{E}(0.5,0.05)$ .}
    \label{fig:figure2}
  \end{subfigure}
  \caption{The utility of each class investors.}
\end{figure}

The first three figures below illustrate the utility of the investors with varying proportions of Class-I investors in the unbiased case (a). It can be observed that if the proportion of Class-I investors is small, their optimal strategy is to use the information as it is and become Class-I investors. Conversely, if there are plenty of Class-I investors in the biased case, the optimal strategy is to spread the biased information and choose to become Class-R investors.

The last four figures that follow represent the utility of the investors with varying decay rates ($b$). It is evident that if the decay rate is significant, there is no further welfare improvement by manipulating investors' beliefs and becoming Class-R investors. However, if the decay rate is small, there is a significant benefit to manipulating beliefs. Less decay rate, more biased the signal and more welfare improvement.

Overall, there are various methods to enhance utility beyond relying on unbiased signals in the market. One such method is intentionally disseminating misleading information and acting as an antagonist to signal-chasing investors. This approach can result in significant welfare improvements, particularly when the majority of investors are in Class-I.  A more precise statement would be: Sufficient counter-party participation is crucial for all-knowing investors to achieve higher utility.

\section{The survival of each class investors}
In this section, we analyze the asymptotic behavior of the proportion of consumption levels. We have the consumption level defined as
\begin{equation*}
c^m_t = e^{-\rho t} \frac{\eta^m_t}{y^m \xi_t},
\end{equation*}
where $m$ is either I or R. Then we have
\begin{equation*}
\frac{c^I_t}{c^R_t} = k\frac{\eta^I_t}{\eta^R_t} =k \eta_t,
\end{equation*}
where $k$ is a constant. Whether
\begin{equation*}
\lim\limits_{t\to \infty} \frac{c^I_t}{c^R_t} \to 0 \ \text{a.s.} \ \text{or} \ \lim\limits_{t\to \infty} \frac{c^I_t}{c^R_t} \to \infty \ \text{a.s.}
\end{equation*}
depends on the asymptotical property of $\eta_t$. In fact, we have
\begin{equation}\label{logzeta}
\log(\eta_t)(\zeta)=\frac{1}{2}\int_0^t \bigg((\mu^R_s-\mu_s)^2-(\mu^I(\zeta)_s-\mu_s)^2\bigg)h_D^2ds +\int_0^t(\mu^I_s-\mu^R_s)h_DdW_s,
\end{equation}
where $\mu_s$ is the state variable of the economy, and $\mu_s^I(\zeta)$ and $\mu_s^R$ are the estimated state variables for Class-I and Class-R investors, respectively. Then, we have
\begin{equation*}
E\bigg[\log(\eta_t)(\zeta)\bigg]=\frac{1}{2}h_D^2\int_0^t E\bigg[(\mu^R_s-\mu_s)^2-(\mu^I(\zeta)_s-\mu_s)^2\bigg]ds.
\end{equation*}

We claim that 
\begin{lemma}\label{lemma2}
	$(\mu_t, \mu^R_t)$ and $(\mu_t, \mu^I(\zeta)_t)$ have unique stationary distribution.
\end{lemma}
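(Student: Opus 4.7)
The plan is to exploit the fact that, once the Kalman filter variances $\gamma^R_t$ and $\gamma^I_t$ have relaxed to their long-run values, the coupled processes $(\mu_t,\mu^R_t)$ and $(\mu_t,\mu^I_t(\zeta))$ are governed by two-dimensional linear Gaussian SDEs of Ornstein--Uhlenbeck type, for which existence and uniqueness of a Gaussian stationary distribution is classical.

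The first step is to analyze the scalar Riccati ODEs $\dot\gamma^R_t=-2\kappa\gamma^R_t+\sigma_\mu^2-h_D^2(\gamma^R_t)^2$ and $\dot\gamma^I_t=-2\kappa\gamma^I_t+\sigma_\mu^2-(h_D^2+h_e^2)(\gamma^I_t)^2$ already displayed in the filtering proposition. Each has a unique positive fixed point $\gamma^R_\infty$ (respectively $\gamma^I_\infty$), obtained explicitly from the quadratic formula, and a routine phase-line argument delivers exponential convergence $\gamma^m_t\to\gamma^m_\infty$ from any positive initial condition.

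The second step is to consider the autonomous limiting systems obtained by freezing $\gamma^R_t\equiv\gamma^R_\infty$ and $\gamma^I_t\equiv\gamma^I_\infty$. For $(\mu_t,\mu^R_t)$ the drift is linear with matrix $A^R=\begin{pmatrix}-\kappa & 0\\ \gamma^R_\infty h_D^2 & -\kappa-\gamma^R_\infty h_D^2\end{pmatrix}$, whose eigenvalues $-\kappa$ and $-\kappa-\gamma^R_\infty h_D^2$ are strictly negative, and the diffusion is driven by the independent Brownian motions $W^\mu$ and $W$, so the noise matrix has full rank. Standard Lyapunov theory for linear Gaussian SDEs then furnishes a unique Gaussian invariant measure whose covariance is the unique positive semidefinite solution of $A^R\Sigma+\Sigma(A^R)^\top+\Sigma_{\mathrm{diff}}=0$. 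The treatment of $(\mu_t,\mu^I_t(\zeta))$ is parallel, with $\gamma^I_\infty(h_D^2+h_e^2)$ in place of $\gamma^R_\infty h_D^2$; the extra constant drift $\gamma^I_\infty h_e\zeta$ simply shifts the stationary mean without affecting stability or uniqueness.

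The main obstacle is reconciling this stationary distribution of the autonomous limit with the genuinely time-inhomogeneous dynamics, because for finite $t$ the filter variances are not yet at their limits. I would address this in one of two ways: (i) adopt the steady-state initial condition $\gamma^R_0=\gamma^R_\infty$, $\gamma^I_0=\gamma^I_\infty$, consistent with the ``infinite history to learn'' framing used elsewhere in the paper, which makes the SDE autonomous from the outset; or (ii) couple the true system to its autonomous limit and combine the exponential decay of $|\gamma^m_t-\gamma^m_\infty|$ with a Gronwall/total-variation estimate to show that the time-$t$ marginals of the true process converge to the Gaussian invariant measure built in Step 2. The essential ingredients for route (ii) — a Hurwitz drift giving a strictly positive contraction rate and an exponentially decaying coefficient perturbation — are both in hand, so the convergence argument should close without additional assumptions.
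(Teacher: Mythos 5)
Your proposal is correct and rests on exactly the structural fact the paper invokes --- the mean-reverting (Ornstein--Uhlenbeck) character of the coupled linear Gaussian system --- which the authors dispose of in a single sentence; your version supplies the details they omit (convergence of the Riccati variances to their positive fixed points, the Hurwitz drift matrix with eigenvalues $-\kappa$ and $-\kappa-\gamma^m_\infty(h_D^2+\cdot)$, full-rank noise from the independent Brownian motions, and the passage from the asymptotically autonomous system to its limit). Nothing further is needed.
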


\begin{proof}
From the property of the mean-reverting process. 
\end{proof}

Now we obtain our conclusion:
\begin{theorem}\label{theorem4}
	We can identify constants $\zeta_3<0$ and $\zeta_4 >0$ such that if $\zeta\in (\zeta_3,\zeta_4)$, then $\lim\limits_{t\to \infty}E\bigg[\log(\eta_t)(\zeta)\bigg] \to -\infty$, which implies that $\lim\limits_{t\to \infty} \frac{c^I_t}{c^R_t} \to \infty$ almost surely. Conversely, if $\zeta\in (-\infty, \zeta_3)\cup(\zeta_4, \infty )$, then $\lim\limits_{t\to \infty}E\bigg[\log(\eta_t)(\zeta)\bigg] \to +\infty$, which implies that $\lim\limits_{t\to \infty} \frac{c^I_t}{c^R_t} \to 0$ almost surely.
\end{theorem}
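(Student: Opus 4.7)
My approach is to reduce the long-time behavior of $\log\eta_t(\zeta)$ to a stationary mean of a quadratic functional of the filters, and then identify the thresholds $\zeta_3,\zeta_4$ as the zeros of that mean viewed as a function of $\zeta$. Taking expectations in (\ref{logzeta}) kills the It\^o martingale and leaves
\begin{equation*}
E[\log\eta_t(\zeta)]=\frac{h_D^2}{2}\int_0^t E\bigl[(\mu^R_s-\mu_s)^2-(\mu^I_s(\zeta)-\mu_s)^2\bigr]\,ds.
\end{equation*}
By \textbf{Lemma \ref{lemma2}}, the triple $(\mu_s,\mu^R_s,\mu^I_s(\zeta))$ has a unique stationary law $\pi_\zeta$, so the integrand converges as $s\to\infty$ to
\begin{equation*}
\tfrac{2}{h_D^2}L(\zeta):=E_{\pi_\zeta}\bigl[(\mu^R-\mu)^2-(\mu^I(\zeta)-\mu)^2\bigr],
\end{equation*}
and $E[\log\eta_t(\zeta)]$ grows linearly in $t$ with slope $L(\zeta)$, diverging to $\pm\infty$ according to the sign of $L$.

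Next I would analyze $L(\zeta)$ on the real line. At $\zeta=0$, a conditioning argument identical in spirit to the proof of the unbiased part of \textbf{Theorem \ref{theorem2}} gives $E_{\pi_0}[(\mu^I(0)-\mu)^2]\le E_{\pi_0}[(\mu^R-\mu)^2]$, since in steady state $\mu^I(0)$ is still a conditional expectation of $\mu$ with respect to a strictly richer $\sigma$-algebra than the one defining $\mu^R$. The map $\zeta\mapsto L(\zeta)$ is smooth, because the bias enters the filter for $\mu^I(\zeta)$ only through the constant extra drift $\gamma^I h_e\zeta$. As $|\zeta|\to\infty$, writing the linear SDE for $\mu^I(\zeta)-\mu$, its stationary mean is linear in $\zeta$ while its stationary variance is independent of $\zeta$, so $E_{\pi_\zeta}[(\mu^I(\zeta)-\mu)^2]$ grows like a positive constant times $\zeta^2$ while $E_{\pi_\zeta}[(\mu^R-\mu)^2]$ stays bounded; hence $L(\zeta)\to-\infty$. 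Continuity and the intermediate value theorem then produce $\zeta_3<0<\zeta_4$ with $L>0$ on $(\zeta_3,\zeta_4)$ and $L<0$ outside, which, transcribed through the sign convention of $\eta_t$ in terms of the SPDs, yields the stated direction of $E[\log\eta_t(\zeta)]$.

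To upgrade to the almost-sure claim about $c^I_t/c^R_t$ I would divide (\ref{logzeta}) by $t$ and invoke ergodicity twice. The drift Ces\`aro average converges almost surely to $2L(\zeta)/h_D^2$ by the ergodic theorem applied to the stationary Markov system of filters. For the martingale part $M_t=\int_0^t(\mu^I_s-\mu^R_s)h_D\,dW_s$, its quadratic variation $\langle M\rangle_t$ grows only linearly in $t$ because $(\mu^I-\mu^R)^2$ is an ergodic functional with finite stationary mean; hence by the strong law of large numbers for continuous square-integrable martingales, $M_t/t\to 0$ almost surely. Combining the two yields $t^{-1}\log\eta_t(\zeta)\to L(\zeta)$ almost surely, which forces $\eta_t$ to $0$ or $\infty$ according to $\mathrm{sign}(L(\zeta))$, and the identity $c^I_t/c^R_t=k\eta_t$ from the start of the section translates this into the claimed a.s.\ limits. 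The main obstacle is the quantitative analysis of $L(\zeta)$ for large $|\zeta|$: establishing the quadratic lower bound on $E_{\pi_\zeta}[(\mu^I(\zeta)-\mu)^2]$ requires writing out the linear Gaussian filter equations for $\mu^I(\zeta)-\mu$ explicitly and solving for the stationary moments, which is the one place the mean-reverting structure of the model is genuinely used and where the book-keeping is heaviest.
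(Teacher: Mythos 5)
Your proposal follows essentially the same route as the paper's Appendix D: take expectations in (\ref{logzeta}) to kill the stochastic integral, pass to the stationary law guaranteed by Lemma \ref{lemma2}, show that the stationary mean $L(\zeta)$ of the difference of squared filtering errors is positive at $\zeta=0$ (by the conditional-expectation argument) and tends to $-\infty$ as $|\zeta|\to\infty$, and read off the thresholds from a sign change. Two remarks. First, continuity plus the intermediate value theorem alone only gives the existence of zeros on each half-line, not that $L$ is positive on all of $(\zeta_3,\zeta_4)$ and negative on the complement; the paper gets this from the monotonicity of $\zeta\mapsto E[(\mu^I_\infty(\zeta)-\mu_\infty)^2]$ on each half-line, inherited from the proof of Lemma \ref{lemma4}. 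Your own observation that $\Delta\mu^I_t(\zeta)$ is deterministic and linear in $\zeta$ actually yields the stronger statement $L(\zeta)=L(0)-c\,\zeta^2$ with $c>0$, so this step is easily closed. Second, you do more than the paper on the final claim: Appendix D stops at the divergence of $E[\log(\eta_t)(\zeta)]$ and simply asserts the almost-sure behavior of $c^I_t/c^R_t$, whereas your ergodic-average-plus-martingale-SLLN argument showing $t^{-1}\log\eta_t(\zeta)\to L(\zeta)$ almost surely is precisely the missing justification for the ``almost surely'' in the theorem statement, and is a worthwhile addition.
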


\begin{proof}
See Appendix D.
\end{proof}

By utilizing the expression provided in Definition 1 of \cite{kogan2006price}, this theorem provides insight into the survival of Class-I investors under conditions of slight or heavy bias in the signal. Specifically, in the long run, Class-I investors are said to survive, or experience relative extinction, depending on the degree of signal bias. Conversely, the survival of Class-R investors follows the opposite pattern.

To illustrate this phenomenon, we present simulated trajectories of the consumption proportion. Notably, the case of $\zeta=C$ is similar to $\zeta=-C$, as demonstrated in the figure of utility. For this reason, we only depict the case where $\zeta \geq 0$.

\begin{figure}
  \centering
  \begin{subfigure}[ht]{0.48\textwidth}
    \includegraphics[width=\textwidth]{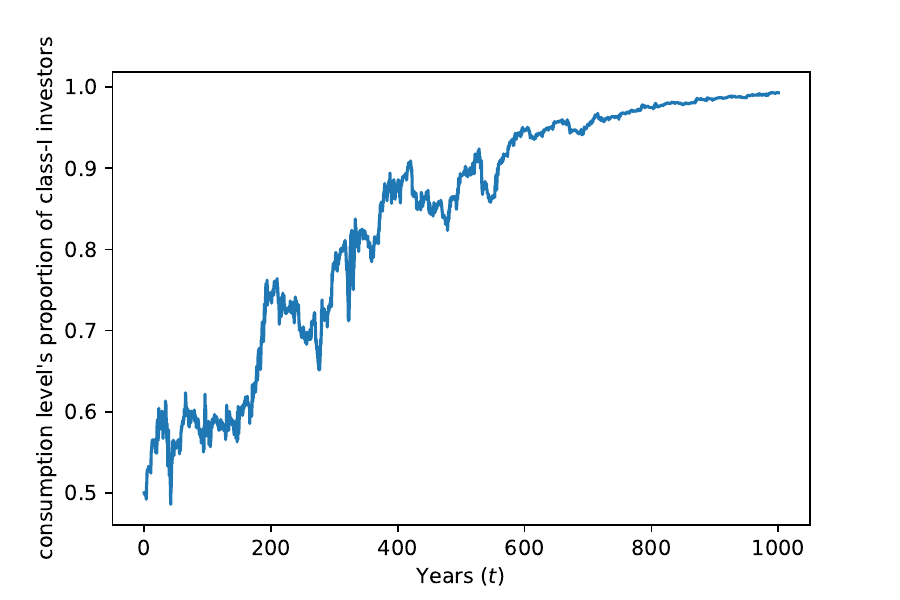}
    \caption{$\zeta=0$ .}
  \end{subfigure}
  \hfill
  \begin{subfigure}[ht]{0.48\textwidth}
    \includegraphics[width=\textwidth]{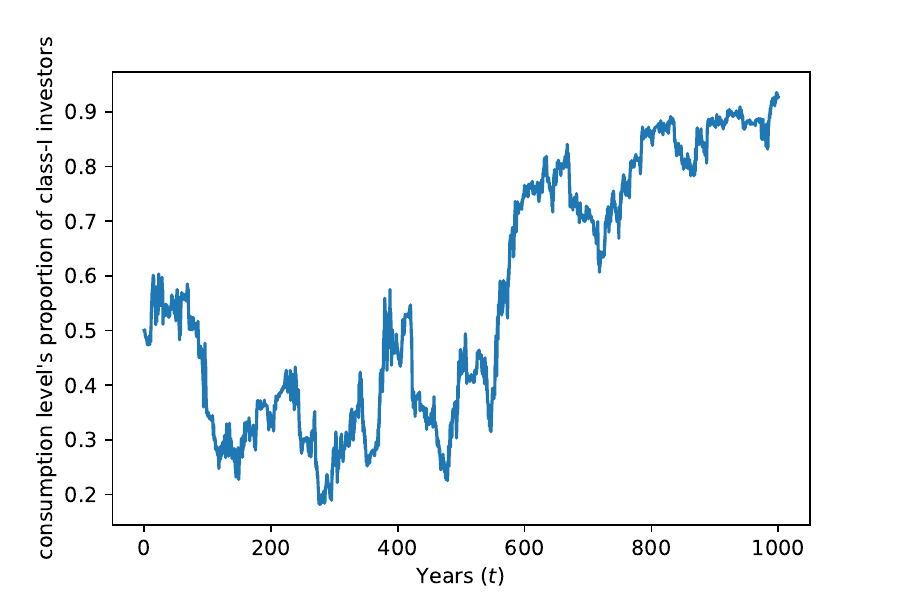}
    \caption{$\zeta=1$ .}
    \label{fig:figure2}
  \end{subfigure}
  \centering
  \begin{subfigure}[h]{0.48\textwidth}
    \includegraphics[width=\textwidth]{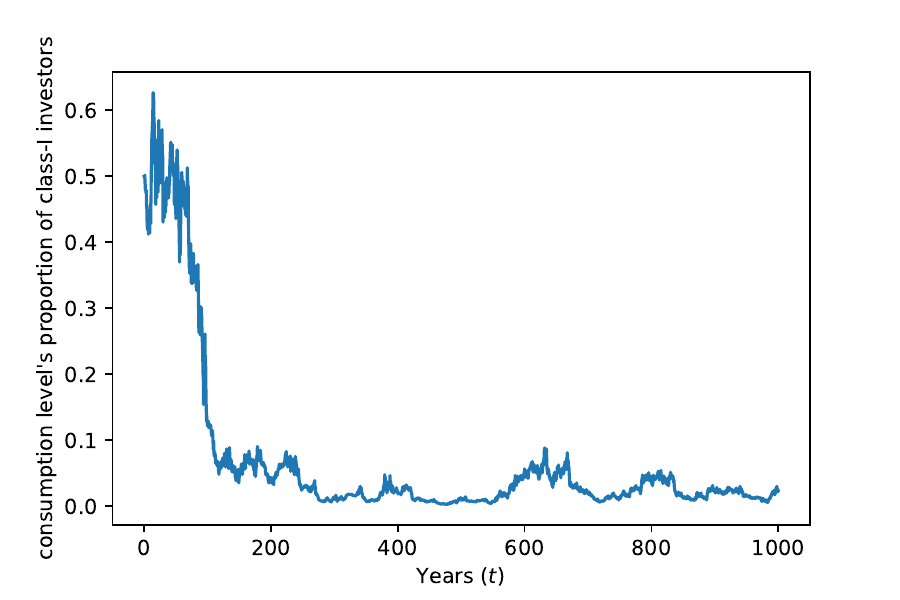}
    \caption{$\zeta=2$ .}
  \end{subfigure}
  \hfill
  \begin{subfigure}[h]{0.48\textwidth}
    \includegraphics[width=\textwidth]{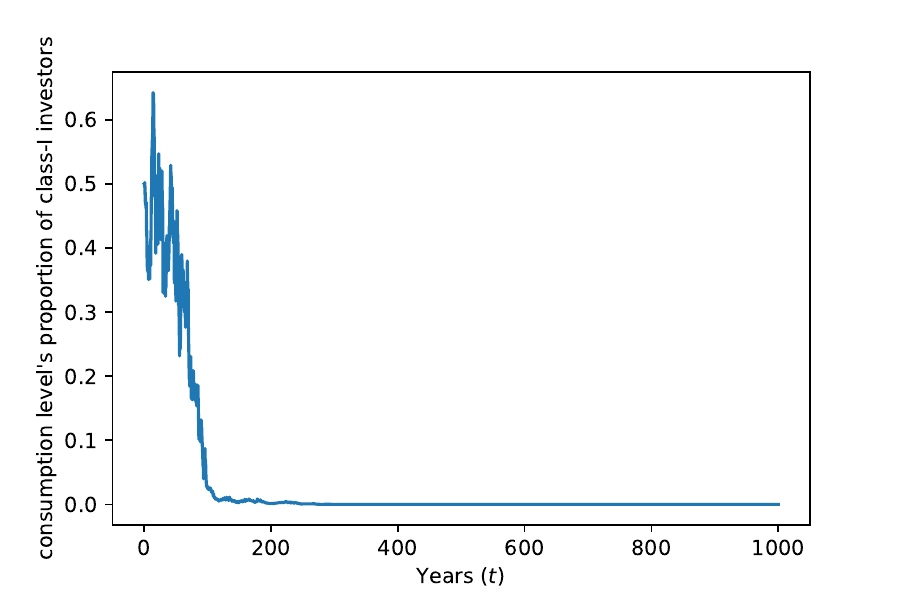}
    \caption{$\zeta=3$ .}
    \label{fig:figure2}
  \end{subfigure}
  \caption{The consumption proportion of Class-I investors.}
\end{figure}

The comparison between $U^I$ and $U^R$ involves a critical point for each utility function, denoted as $\zeta_1$ and $\zeta_2$, respectively. This prompts the question: is there any relationship between these critical points and those of the survival functions, denoted as $\zeta_3$ and $\zeta_4$?

Remarkably, $\zeta_1$, $\zeta_2$, $\zeta_3$, and $\zeta_4$ are equal when the correlation coefficient $\rho$ is zero. However, when $\rho=0$, a potential issue arises because $U^I$ and $U^R$ become infinite and meaningless. To address this, we adopt a trick from infinitely repeated games in game theory, by modifying the utility function as follows: $\lim\limits_{T\to \infty} \frac{1}{T}E[\int_0^T u(t,c_t)dt]$. This modification allows us to interpret the maximization of utility as attempting to survive and consume for as long as possible.

By applying this trick, we find that $\zeta_1$ and $\zeta_2$ are equal to $\zeta_3$ and $\zeta_4$. In other words, the early-time consumption level affects $U^I$ and $U^R$ only if $\rho > 0$.

Another reason for the equivalence between $\zeta_1$, $\zeta_2$, $\zeta_3$, and $\zeta_4$ is that if we assume the initial state $\mu_0$, $\mu^R_0$, and $\mu^I_0$ follow the steady distribution, then we can express $U^I - U^R$ as:
$$
	U^I-U^R =E \bigg[ \int_0^\infty e^{-\rho t}\big( (\mu^R_\infty-\mu_\infty)^2-(\mu^I_\infty(\zeta)-\mu_\infty)^2 \big)t \ dt   \bigg]\\
$$
$$
	= C E\big( (\mu^R_\infty-\mu_\infty)^2-(\mu^I_\infty(\zeta)-\mu_\infty)^2 \big),
$$
where $C>0$ is a constant. As both sides of the equation have the same signal, they are equivalent.

This assumption can be justified by considering that the initial state of the investors is determined by learning from history. Although they trade at time zero, there are plenty of historical statistics to help shape their initial estimation of the economy's state. If the length of the historical period is sufficiently long, we can assume that the initial state follows the steady distribution. 

We do not need to worry about the case where $\zeta=\zeta_3$ and $\zeta=\zeta_4$, in which the limitation of $\lim\limits_{t\to \infty} \frac{c^I_t}{c^R_t}$ cannot be assured. We assume that the circumstances where the degree of biases is equal to a certain special number will never occur.

Furthermore, if there are passive investors in the market, it is evident that they will always survive in the market as their consumption level $c^P_t$ equals $e^P D_t$. They do not trade and only consume the dividends distributed by the risky asset.

\section{Conclusion}

We show that utilizing an unbiased or slightly biased signal for trading can result in higher welfare, as measured by expected realized utility, for investors belonging to Class-I, compared to those who do not use the signal (Class-R). However, if the signal is heavily biased, the situation is reversed.

Furthermore, we introduce passive investors into the market, who do not affect the equilibrium. Their expected utility serves as a useful benchmark for assessing the impact of trading. In the unbiased case, $U^I > U^R$. If the proportion of Class-I investors ($e^R$) is too high, then $U^{bench} > U^I$, indicating that the benefits gained from trading are outweighed by the harm to market efficiency. This phenomenon is also observed in the biased case, implying a double loss for all investors.

We also consider a hypothetical scenario and find that all-knowing investors can improve their welfare not only by utilizing the information advantage of a signal but also by manipulating the other class's estimation of the economy state using a biased signal. This strategy can be particularly effective when the majority of investors are Class-I investors. Sufficient counter-party participation is crucial for all-knowing investors to achieve higher utility.

Finally, we analyze the survival of each type of investor and find that higher welfare and long-term survival usually occur simultaneously if we assume $\rho=0$ or that the initial state follows the steady distribution. In the case of an unbiased or slightly biased signal, Class-I investors are the winners, while Class-R investors win when the signal is heavily biased.

\appendix

\section*{Appendix A. Proof of  \textbf{Theorem \ref{theorem2}}}
We derive this result from $(U^I-U^R)(\zeta)|_{\zeta=0} > 0$ and the following two lemmas.

\begin{lemma}\label{lemma3}
If $\zeta \to +\infty$ or $-\infty$, then $\mu^I_t(\zeta) \to +\infty$ or $-\infty$, respectively. In other words, the Class-I investors will believe the market state always stays in the extremely high or low state. Consequently,
\begin{equation*}
U^I(\zeta) < U^R(\zeta).
\end{equation*}
\end{lemma}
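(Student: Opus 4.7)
My plan is to make the $\zeta$-dependence of $\mu^I_t(\zeta)$ completely explicit by observing that it is affine in $\zeta$, then plug this into the double-integral formula for $U^I(\zeta)-U^R(\zeta)$ from the proof of the unbiased case, and finally read off the sign from the resulting downward-opening quadratic in $\zeta$.

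Since the posterior variance $\gamma^I_t$ is deterministic (it solves a Riccati ODE), the objective-measure SDE for $\mu^I_t(\zeta)$ displayed in the text is linear in $\mu^I$ with a drift that depends affinely on $\zeta$. I would therefore write
\[
\mu^I_t(\zeta)=\mu^I_t(0)+\zeta\,\nu_t,
\]
where $\nu_t$ is deterministic and solves the linear ODE
\[
\dot{\nu}_t=-\bigl(\kappa+\gamma^I_t(h_D^2+h_e^2)\bigr)\nu_t+\gamma^I_t h_e,\qquad \nu_0=0.
\]
Because $\gamma^I_t>0$ for $t>0$ and tends to the positive steady state $\gamma^I_\infty$ of its Riccati ODE, integrating the ODE for $\nu_t$ yields $\nu_t>0$ for every $t>0$ and $\nu_t\to\nu_\infty:=\gamma^I_\infty h_e/(\kappa+\gamma^I_\infty(h_D^2+h_e^2))>0$. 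Thus for every fixed $t>0$, $\mu^I_t(\zeta)\to\pm\infty$ almost surely as $\zeta\to\pm\infty$, which is the first assertion of the lemma.

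For the utility comparison I would reuse the derivation in the proof of the unbiased case theorem, which never exploited $\zeta=0$ and so extends verbatim to give
\[
U^I(\zeta)-U^R(\zeta)=\frac{h_D^2}{2}\int_0^\infty e^{-\rho t}\!\!\int_0^t E\bigl[(\mu^R_s-\mu_s)^2-(\mu^I_s(\zeta)-\mu_s)^2\bigr]\,ds\,dt.
\]
Substituting the affine decomposition and using that $\nu_s$ is deterministic,
\[
E\bigl[(\mu^I_s(\zeta)-\mu_s)^2\bigr]=E\bigl[(\mu^I_s(0)-\mu_s)^2\bigr]+2\zeta\,\nu_s\,E\bigl[\mu^I_s(0)-\mu_s\bigr]+\zeta^2\nu_s^2,
\]
so the whole difference reduces to a quadratic polynomial in $\zeta$,
\[
U^I(\zeta)-U^R(\zeta)=A+B\,\zeta-C\,\zeta^2,\qquad C=\frac{h_D^2}{2}\int_0^\infty e^{-\rho t}\!\!\int_0^t\nu_s^2\,ds\,dt.
\]
Because $\nu_s$ is bounded below by $\nu_\infty/2$ for $s$ large and $\rho>0$ makes the time integrals finite, $C>0$ strictly. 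A downward-opening quadratic tends to $-\infty$ at $\pm\infty$, so $U^I(\zeta)<U^R(\zeta)$ whenever $|\zeta|$ is sufficiently large, which is the claimed conclusion.

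The step I expect to demand the most care is justifying the Fubini exchange that produces the polynomial form, together with verifying $C<\infty$ and the finiteness of $A$ and $B$. This requires uniform-in-$s$ bounds on $E[(\mu^R_s-\mu_s)^2]$, $E[(\mu^I_s(0)-\mu_s)^2]$, and $E[\mu^I_s(0)]$, all of which follow from the boundedness of $\gamma^R_t,\gamma^I_t$ supplied by their Riccati ODEs combined with the Gaussian mean-reverting structure of $(\mu_t,\mu^R_t,\mu^I_t(0))$; the outer exponential discount $e^{-\rho t}$ then secures the final integrability.
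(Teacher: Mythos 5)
Your proof is correct and follows essentially the same route as the paper: the paper's one-line argument likewise rests on the identity $U^I-U^R=\frac{h_D^2}{2}\int_0^\infty e^{-\rho t}\int_0^t E\bigl[(\mu^R_s-\mu_s)^2-(\mu^I_s(\zeta)-\mu_s)^2\bigr]\,ds\,dt$ together with the blow-up of $E\bigl[(\mu^I_s(\zeta)-\mu_s)^2\bigr]$, and the deterministic affine decomposition $\mu^I_t(\zeta)=\mu^I_t(0)+\zeta\nu_t$ that you exploit is exactly the observation the paper records in its proof of the neighbouring monotonicity lemma. Your explicit downward-opening quadratic in $\zeta$ is in fact tidier than the paper's ``$E[(\pm\infty-\mu_s)^2]=+\infty$'' shorthand, since it justifies the interchange of the $\zeta\to\pm\infty$ limit with the time integrals rather than leaving it implicit.
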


\begin{proof}
The proof is the same as before, with the only difference being that we need to prove the reverse relation with
\begin{equation*}
E[(\mu^R_s-\mu_s)^2] < E[( \pm \infty -\mu_s )^2] = +\infty
\end{equation*}
\end{proof}

Moreover, if we treat the gap $U^I-U^R$ as a function with respect to $\zeta$, we have the following lemma:
\begin{lemma}\label{lemma4}
$(U^I-U^R)(\zeta)$ is an increasing function if $\zeta<0$ and a decreasing function if $\zeta>0$.
\end{lemma}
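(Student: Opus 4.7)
The plan is to reduce the monotonicity claim to a computation of $\frac{d}{d\zeta}(U^I-U^R)$ using the explicit representation built up earlier. Starting from
\[
U^I-U^R = E\Bigl[\int_0^\infty e^{-\rho t}(\log \eta^I_t-\log\eta^R_t)\,dt\Bigr]
\]
and the identity
\[
E\bigl[\log\eta^I_t-\log\eta^R_t\bigr]=\tfrac{h_D^2}{2}\int_0^t E\bigl[(\mu^R_s-\mu_s)^2-(\mu^I_s(\zeta)-\mu_s)^2\bigr]ds
\]
derived in the proof of the first theorem, I would apply Fubini to exchange the two time integrals and obtain the cleaner expression
\[
U^I-U^R=\frac{h_D^2}{2\rho}\int_0^\infty e^{-\rho s}\,E\bigl[(\mu^R_s-\mu_s)^2-(\mu^I_s(\zeta)-\mu_s)^2\bigr]\,ds.
\]
Only the second term depends on $\zeta$, so differentiating inside the integral gives
\[
\frac{d}{d\zeta}(U^I-U^R)=-\frac{h_D^2}{\rho}\int_0^\infty e^{-\rho s}\,E\!\left[(\mu^I_s(\zeta)-\mu_s)\,\partial_\zeta\mu^I_s(\zeta)\right]ds.
\]

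The key step is then to exploit the structure of the filter SDE for $\mu^I$. Because $\gamma^I_t$ and the diffusion coefficients do not depend on $\zeta$, the process $A_t:=\partial_\zeta\mu^I_t(\zeta)$ satisfies the linear \emph{deterministic} ODE
\[
dA_t=\bigl[-\kappa-\gamma^I_t(h_D^2+h_e^2)\bigr]A_t\,dt+\gamma^I_t h_e\,dt,\qquad A_0=0,
\]
so $A_t$ is a deterministic, strictly positive function of $t$ for $t>0$. Next I would compute $m_s:=E[\mu^I_s(\zeta)-\mu_s]$ by taking expectations in the two filter SDEs: using the initial assumption $\mu^I_0=\mu_0$ one finds $m_0=0$ and
\[
dm_t=\bigl[-\kappa-\gamma^I_t(h_D^2+h_e^2)\bigr]m_t\,dt+\zeta\gamma^I_t h_e\,dt,
\]
which is exactly the ODE satisfied by $\zeta A_t$; by uniqueness $m_t=\zeta A_t$. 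Substituting this back yields
\[
\frac{d}{d\zeta}(U^I-U^R)=-\frac{h_D^2\,\zeta}{\rho}\int_0^\infty e^{-\rho s}A_s^2\,ds,
\]
whose sign is opposite to that of $\zeta$; hence $(U^I-U^R)(\zeta)$ is increasing on $(-\infty,0)$ and decreasing on $(0,\infty)$, which is the claim.

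The main obstacle I anticipate is twofold. First, I must justify the differentiation under the integral sign together with Fubini's theorem; this requires a domination argument showing that $E[(\mu^I_s(\zeta)-\mu_s)^2]$ and its $\zeta$-derivative grow at most polynomially in $s$, so that the factor $e^{-\rho s}$ ensures integrability uniformly in $\zeta$ on compact sets. Second, a subtle but critical point is the observation that $\partial_\zeta\mu^I_s(\zeta)$ is deterministic: this follows only because $\zeta$ enters the SDE for $\mu^I$ purely through a deterministic drift contribution $\gamma^I_t h_e\zeta$, while the stochastic part and the coefficient $\gamma^I_t$ are independent of $\zeta$. Once these two points are verified, the proof reduces to the clean sign calculation above.
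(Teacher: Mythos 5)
Your proof is correct and takes essentially the same route as the paper: both hinge on the observation that $\partial_\zeta\mu^I_t(\zeta)$ solves a deterministic linear ODE (so the stochastic-integral term drops out under expectation) and that $E[\mu^I_t(\zeta)-\mu_t]$ is linear in $\zeta$, i.e.\ equals $\zeta A_t$. The only cosmetic difference is that the paper packages the sign argument as ``first derivative vanishes at $\zeta=0$ plus concavity in $\zeta$,'' while you compute the first derivative directly for all $\zeta$ and read off that its sign is opposite to that of $\zeta$ — the underlying computation is identical.
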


\begin{proof}
In fact, we observe that $\Delta \mu^I_t(\zeta) :=  \mu^I_t(\zeta)- \mu^I_t(0)$ is a determined process with respect to $\zeta$, (noticing $d\Delta \mu^I_t(\zeta) = -\kappa \Delta \mu^I_t(\zeta)dt -\gamma^I_t (h_D^2+h_e^2) \Delta \mu^I_t(\zeta)dt +\gamma^I_t h_e \zeta dt $) thus it is independent to any stochastic process. 

We calculate
\begin{equation*}
	E[\frac{d \log\eta_t^I(\zeta) }{d\zeta}] =E[ a_t(\zeta)] ,
\end{equation*}
where
\begin{equation*}
	a_t(\zeta) := -\int_0^t \Delta \mu^I_s(\zeta) \frac{d\Delta \mu^I_s(\zeta)}{d\zeta} h_D^2ds,
\end{equation*}
\begin{equation*}
	E[\frac{d^2 \log\eta_t^I(\zeta) }{(d\zeta)^2}] = b_t(\zeta) ,
\end{equation*}
where 
\begin{equation*}
	b_t(\zeta)  := -\int_0^t (\frac{d\Delta \mu^I_s(\zeta)}{d\zeta})^2h_D^2 ds.
\end{equation*}
We find that $E[ a_t(\zeta)] $ and $b_t <0$. Therefore, $\frac{d(U^I-U_R)(\zeta)}{d\zeta}|_{\zeta=0}=0$ and $\frac{d^2(U^I-U_R)(\zeta)}{(d\zeta)^2} <0$. 
\end{proof}

\section*{Appendix B. Proof of \textbf{Lemma \ref{lemma1}}}
As for the benchmark, we have:
\begin{equation*}
D_t=c^R_t+c^I_t=e^{-\rho t}\frac{1}{\xi_t} \left(\frac{\eta^R_t}{y^R}+ \frac{\eta^I_t}{y^I}\right),
\end{equation*}
\begin{equation*}
P_0=X^R_0+X^I_0=\frac{1}{\rho}\left( \frac{1}{ y^R}+\frac{1}{ y^I}\right).
\end{equation*}

Given the pricing kernel, we can determine the risky asset's price through its dividend as follows:
\begin{equation*}
\xi_t P_t = E\left[\int_t^\infty \xi_s D_s | \mathcal{F}_t\right].
\end{equation*}
Therefore,
\begin{equation*}
P_t = D_t E\left[\int_t^\infty \frac{\xi_s}{\xi_t}\frac{D_s}{D_t} ds | \mathcal{F}_t\right].
\end{equation*}
If we rewrite $D_t$ as:
\begin{equation*}
D_t = e^{-\rho t}\frac{1}{\xi_t}\left(\frac{\eta^I_t}{y^I}+ \frac{\eta^R_t}{y^R}\right) = e^{-\rho t}\frac{1}{\xi_t}\eta^{bench}_t\left(\frac{1}{y^I}+\frac{1}{y^R}\right),
\end{equation*}
where
\begin{equation*}
\eta^{bench}_t:=\frac{y^I}{y^R+y^I}\eta^R_t+\frac{y^R}{y^R+y^I}\eta^I_t.
\end{equation*}
Substituting it into the expression for $P_t$ yields:
\begin{equation*}
P_t = D_t E\left[\int_t^\infty e^{-\rho (s-t)} \frac{\eta^{bench}_s}{\eta^{bench}_t} ds | \mathcal{F}_t\right].
\end{equation*}

In fact, $\eta^{bench}$ is an exponential martingale.

As we have $\frac{y^I}{y^R}=:k$, then $\eta^{bench}_t$ can be represented as $ \frac{k}{1+k}\eta^R+\frac{1}{1+k}\eta^I$. We have
\begin{align*}
d\eta^{bench}_t &= \frac{k}{1+k}d\eta^R_t+\frac{1}{1+k}d\eta^I_t\\
&= \frac{k}{1+k}\eta^R_t(\mu^R_t-\mu_t)h_DdW_t+\frac{1}{1+k}\eta^I_t(\mu^I_t-\mu_t)h_DdW_t\\
&= \frac{k\eta_t}{1+k\eta_t}\eta^{bench}_t(\mu^R_t-\mu_t)h_DdW_t+\frac{1}{1+k\eta_t}\eta^{bench}_t(\mu^I_t-\mu_t)h_DdW_t\\
&= \eta^{bench}_t(\lambda_t(\mu^R_t-\mu_t)+(1-\lambda_t)(\mu^I_t-\mu_t))h_DdW_t
\end{align*}
where the third equation holds for the relation $\eta_t=\frac{\xi^I_t}{\xi^R_t}=\frac{\xi_t/\eta^I_t}{\xi_t/\eta^R_t}=\frac{\eta^R_t}{\eta^I_t}$, and we calculate that $\eta_t^R= \frac{(1+k)\eta_t}{1+k\eta_t}\eta^{bench}_t$ and $\eta^I_t= \frac{1+k}{1+k\eta_t}\eta^{bench}_t$.

Thus $E\big[\frac{\eta^{bench}_s}{\eta^{bench}_t}\big] =1$ and
\begin{equation*}
P_t = \frac{1}{\rho} D_t.
\end{equation*}
It implies that the price-earning ratio is a constant.

Furthermore, we see that $P_0 = \frac{1}{\rho} D_0$. This corresponds to the fact that $U^{bench}=E\bigg[ \int_0^\infty u\big(t,\frac{\rho D_t}{D_0}\big) dt \bigg]$ is a constant.

\section*{Appendix C. Proof of  \textbf{Theorem \ref{theorem3}}}
We define a functional as follows:
\begin{equation*}
U(\eta) = E\left[ \int_0^\infty e^{-\rho t} \log\left(e^{-\rho t}\frac{\rho \eta_t}{\xi_t}\right) dt \right],
\end{equation*}
where $\eta$ is a stochastic process. Then we see $U^I = U(\eta^I)$ and $U^R = U(\eta^R)$. Moreover, we have
\begin{equation*}
U^{bench} = E\left[ \int_0^\infty u(t,\frac{D_t}{P_0}) dt \right] = E\left[ \int_0^\infty e^{-\rho t} \log\left(e^{-\rho t}\frac{\rho \eta^{bench}_t}{\xi_t}\right) dt \right].
\end{equation*}
Thus we have
\begin{equation*}
U^{bench} = U(\eta^{bench}).
\end{equation*}
As for the initial proportion $e^R$, we have $\frac{e^R}{e^I}=\frac{X_0^R}{X_0^I}=\frac{y^I}{y^R}=k$. With $e^I+e^R=1$, we have $e^R=\frac{y^I}{y^R+y^I}$, and we can rewrite $\eta^{bench}_t$ as $\eta^{bench}_t = e^R \eta^R_t + e^I \eta^I_t$.

Now, let's consider $U^I(e^R)$ and make some techniques to analyze it. We consider the difference $U^I(e^R)-U^{bench}(e^R)$. Similarly, 
\begin{equation*}
U^I - U^{bench} = E\left[ \int_0^\infty e^{-\rho t} \left(\log(\eta_t^I)-\log(\eta_t^{bench})\right) dt \right].
\end{equation*}
This can help us eliminate the effect of changes in $\xi_t$.

We calculate that:
\begin{equation*}
\frac{d}{de^R}(U^I-U^{bench})(e^R)=-E\left[ \int_0^\infty e^{-\rho t} \frac{\eta^R_t-\eta^I_t}{e^R\eta^R_t+(1-e^R)\eta^I_t} dt \right],
\end{equation*}
\begin{equation*}
\frac{d^2}{d(e^R)^2}(U^I-U^{bench})(e^R)=E\left[ \int_0^\infty e^{-\rho t} \frac{(\eta^R_t-\eta^I_t)^2}{(e^R\eta^R_t+(1-e^R)\eta^I_t)^2} dt \right]\geq 0.
\end{equation*}

Taking $e^R=0$ and $e^R=1$ into the above equations, we have:
\begin{equation*}
\frac{d}{de^R}(U^I-U^{bench})(e^R)\bigg|_{e^R=0}=-\int_0^\infty e^{-\rho t} \left(E\left[\frac{\eta^R_t}{\eta^I_t}\right]-1\right)dt,
\end{equation*}
\begin{equation*}
\frac{d}{de^R}(U^I-U^{bench})(e^R)\bigg|_{e^R=1}=-\int_0^\infty e^{-\rho t} \left(1-E\left[\frac{\eta^I_t}{\eta^R_t}\right]\right)dt.
\end{equation*}

Using the equation:
\begin{equation*}
\eta^m_t = \exp\left(-\frac{1}{2}\int_0^t (\mu^m_s-\mu_s)^2h_D^2ds +\int_0^t(\mu^m_s-\mu_s)h_DdW_s \right).
\end{equation*}

We simplify the expressions as follows:

\begin{equation*}
E\left[\frac{\eta^R_t}{\eta^I_t}\right]= E\left[\exp\left(\frac{1}{2}\int_0^t \left((\mu^R_s-\mu^I_s)^2+(\mu^I_s-\mu_s)^2 -(\mu^R_s-\mu_s)^2 \right)h_D^2ds \right)\right],
\end{equation*}

\begin{equation*}
E\left[\frac{\eta^I_t}{\eta^R_t}\right]= E\left[\exp\left(\frac{1}{2}\int_0^t \left((\mu^R_s-\mu^I_s)^2+(\mu^R_s-\mu_s)^2 -(\mu^I_s-\mu_s)^2 \right)h_D^2ds \right)\right].
\end{equation*}

Given that $\mu^R_s=E[\mu_s|\mathcal{F}^D_s]$, $\mu^I_s=E[\mu_s|\mathcal{F}^D_s \vee\mathcal{F}^e_s]$, and $\mu^R_s=E[\mu^I_s|\mathcal{F}^e_s]$, we can use the property of conditional expectation to obtain:
\begin{equation*}
E\bigg[(\mu^R_s-\mu^I_s)^2+(\mu^I_s-\mu_s)^2 -(\mu^R_s-\mu_s)^2 \bigg]=0.
\end{equation*}

Applying Jensen's inequality gives
\begin{equation*}
E\left[\frac{\eta^R_t}{\eta^I_t}\right]> \exp\bigg( E\bigg[\frac{1}{2}\int_0^t \bigg((\mu^R_s-\mu^I_s)^2+(\mu^I_s-\mu_s)^2 -(\mu^R_s-\mu_s)^2 \bigg)h_D^2ds \bigg] \bigg)=1.
\end{equation*}

Thus, we have:
\begin{equation*}
\frac{d}{de^R}(U^I-U^{bench})(e^R)|_{e^R=0} < 0.
\end{equation*}

We also calculate:
\begin{equation*}
E\bigg[(\mu^R_s-\mu^I_s)^2+(\mu^R_s-\mu_s)^2 -(\mu^I_s-\mu_s)^2 \bigg]>0.
\end{equation*}

Using this inequality, we get:
\begin{equation*}
E\left[\frac{\eta^I_t}{\eta^R_t}\right]\geq \exp\bigg( E\bigg[\frac{1}{2}\int_0^t \bigg((\mu^R_s-\mu^I_s)^2+(\mu^R_s-\mu_s)^2 -(\mu^I_s-\mu_s)^2 \bigg)h_D^2ds \bigg] \bigg)>1
\end{equation*}

and
\begin{equation*}
\frac{d}{de^R}(U^I-U^{bench})(e^R)|_{e^R=1}> 0.
\end{equation*}

As $U^{bench}$ is constant, $U^I$ can inherit all the properties of $U^I-U^{bench}$ stated above. Noting that the gap between $U^I$ and $U^R$ is constant, we can conclude that $U^R$ inherits all the properties as well.

Regarding the biased cases, we only need to modify two relations that
\begin{align*}
	&E\bigg[(\mu^R_s-\mu^I_s(\zeta))^2+(\mu^I_s(\zeta)-\mu_s)^2 -(\mu^R_s-\mu_s)^2 \bigg] \\
	=&E\bigg[(\mu^R_s-\mu^I_s(0))^2+(\mu^I_s(0)-\mu_s)^2 -(\mu^R_s-\mu_s)^2 \bigg] + 2 E(\Delta \mu^I_t(\zeta))^2> 0
\end{align*}
\begin{align*}
	&E\bigg[(\mu^R_s-\mu^I_s(\zeta))^2+(\mu^R_s-\mu_s)^2 -(\mu^I_s(\zeta)-\mu_s)^2 \bigg] \\
	=&E\bigg[(\mu^R_s-\mu^I_s(0))^2+(\mu^R_s-\mu_s)^2 -(\mu^I_s(0)-\mu_s)^2 \bigg] + 0 E(\Delta \mu^I_t(\zeta))^2>0
\end{align*}
and the other content of proof remains same.

\section*{Appendix D. Proof of  \textbf{Theorem \ref{theorem4}}}
\textbf{Lemma \ref{lemma2}} shows that $\lim\limits_{s\to \infty}E\bigg[(\mu^R_s-\mu_s)^2\bigg]$ and $\lim\limits_{s\to \infty}E\bigg[(\mu^I_s(\zeta)-\mu_s)^2\bigg]$ exist, which we denote them as $E\bigg[(\mu^R_\infty-\mu_\infty)^2\bigg]$ and $E\bigg[(\mu^I_\infty(\zeta)-\mu_\infty)^2\bigg]$ for convenience. The proof in \textbf{Lemma \ref{lemma4}} also shows that $E(\mu^I_t(\zeta)-\mu_t)^2$ for any $t$ is an increasing function if $\zeta>0$ and a decreasing function if $\zeta<0$. From the analysis of the limit, we conclude that $E\bigg[(\mu^I_\infty(\zeta)-\mu_\infty)^2\bigg]$ inherit the same property with respect to $\zeta$.

	Similarly, we have $E\bigg[(\mu^R_\infty-\mu_\infty)^2-(\mu^I_\infty(0)-\mu_\infty)^2\bigg]>0$ and $E\bigg[(\mu^R_\infty-\mu_\infty)^2-(\mu^I_\infty(\pm \infty)-\mu_\infty)^2\bigg]<0$. Therefore, there exist constants $\zeta_3<0$ and $\zeta_4>0$ such that if $\zeta\in(\zeta_3,\zeta_4)$, then $E\bigg[(\mu^R_\infty-\mu_\infty)^2-(\mu^I_\infty(\zeta)-\mu_\infty)^2\bigg]>0$, and if $\zeta\in(-\infty,\zeta_3)\cup(\zeta_4,\infty)$, then $E\bigg[(\mu^R_\infty-\mu_\infty)^2-(\mu^I_\infty(\zeta)-\mu_\infty)^2\bigg]<0$. Consequently, using (\ref{logzeta}), we have $\lim\limits_{t\to \infty}E\bigg[\log(\eta_t)(\zeta)\bigg] \to +\infty$ for $\zeta\in(-\infty,\zeta_3)\cup(\zeta_4,\infty)$ and $\lim\limits_{t\to \infty}E\bigg[\log(\eta_t)(\zeta)\bigg] \to -\infty$ for $\zeta\in(\zeta_3,\zeta_4)$.

\end{document}